\documentclass[11pt]{article}

\usepackage[margin=1in]{geometry}
\usepackage{amsmath,amssymb,amsthm,mathtools,bm}
\usepackage{booktabs}
\usepackage{graphicx}
\usepackage{subcaption}
\usepackage{tikz}
\usetikzlibrary{arrows.meta,positioning}
\usepackage[numbers,sort&compress]{natbib}
\usepackage[hidelinks]{hyperref}
\usepackage[nameinlink,capitalise]{cleveref}
\usepackage{xcolor}
\usepackage{microtype}
\usepackage{placeins}
\usepackage{float}

\newtheorem{theorem}{Theorem}
\newtheorem{lemma}{Lemma}
\newtheorem{proposition}{Proposition}
\newtheorem{corollary}{Corollary}
\newtheorem{remark}{Remark}
\newtheorem{definition}{Definition}

\newcommand{\R}{\mathbb{R}}
\newcommand{\Z}{\mathbb{Z}}
\newcommand{\T}{\mathbb{T}}
\newcommand{\Sph}{\mathbb{S}}
\newcommand{\E}{\mathbb{E}}
\newcommand{\Prob}{\mathbb{P}}
\newcommand{\Tr}{\operatorname{Tr}}
\newcommand{\dd}{\mathrm{d}}
\newcommand{\Id}{\mathrm{I}}
\newcommand{\SO}{\mathrm{SO}}
\newcommand{\SU}{\mathrm{SU}}
\newcommand{\diag}{\operatorname{diag}}
\newcommand{\norm}[1]{\left\lVert #1\right\rVert}

\title{Non-Abelian Spin Counting of Ordered Stochastic Trajectories: Reentrant Finite-Time Chern Numbers}
\author{Yangyang Du\thanks{Department of Mathematics, University of Michigan, Ann Arbor, Michigan, USA}}
\date{\today}

\begin{document}
\maketitle

\begin{abstract}
Conventional full counting statistics assigns commuting phases to integrated stochastic currents and therefore resolves net transport but not, in general, the temporal ordering of events associated with different cycles. We introduce a non-Abelian counting construction in which crossings of two fundamental cycles rotate an auxiliary spin about different axes. The resulting ordered trajectory statistic has an exact finite-dimensional evolution equation for its first moment. The two rotation angles form a counting torus, and whenever the mean spin is nonzero its normalized direction defines a map $\mathbb T^2\to\mathbb S^2$, equivalently a complex eigenline bundle with a Chern number. Our main analytical result is a Chern--parity correspondence. Reflection symmetry equips this eigenline with a real structure and expresses $C_T\bmod2$ through first Stiefel--Whitney classes on the circles fixed by reflection. When the transverse polarization has no additional zeros along the reflection-fixed circles, these classes reduce to ordinary current-parity statistics at the four high-symmetry counting points. For a five-state nonequilibrium figure-eight network, varying only the observation time produces four polarization-gap closings and the reentrant sequence $C_T=0\to-1\to0\to-1\to0$. Every transition occurs at $(\pi,\pi)$ and coincides with a sign reversal of $\mathbb E[(-1)^{Q_1+Q_2}]$, while the full integer Chern number is obtained independently from the two-dimensional spin texture. Finite observation time can therefore organize a fixed stochastic process into distinct topological sectors of its ordered path ensemble.
\end{abstract}

\section{Introduction}

Full counting statistics (FCS) characterizes fluctuations of time-integrated observables in stochastic and mesoscopic transport. In its standard form, counting fields conjugate to integrated currents are introduced through scalar phase factors or tilted generators \cite{LevitovLeeLesovik1996,BagretsNazarov2003}. For continuous-time Markov processes, this framework underlies large-deviation descriptions of current fluctuations and fluctuation relations for nonequilibrium steady states \cite{LebowitzSpohn1999,AndrieuxGaspard2007}. Finite-time current statistics can contain transient information beyond asymptotic cycle currents \cite{PolettiniEsposito2014}, while stochastic thermodynamics relates trajectory currents to affinities, entropy production, and fluctuation relations \cite{Schnakenberg1976,Seifert2012}.

Suppose a trajectory $\gamma$ carries $d$ integer-valued integrated currents,
\begin{equation}
    \bm Q[\gamma]
    =
    \bigl(Q_1[\gamma],\ldots,Q_d[\gamma]\bigr)
    \in\mathbb Z^d.
    \label{eq::int_val_currents}
\end{equation}
Conventional FCS introduces periodic counting fields $\bm\chi\in\mathbb T^d$ and assigns the scalar weight
\begin{equation}
    \exp\!\left(i\bm\chi\cdot\bm Q[\gamma]\right)
    =
    \exp\!\left(i\sum_{a=1}^d\chi_aQ_a[\gamma]\right).
\end{equation}
This construction is Abelian since the counting factors commute. Consequently, when several cycle currents are counted, the statistic retains the accumulated net currents but generally discards the temporal order in which increments associated with different cycles occurred. Two trajectories can therefore have the same $\bm Q[\gamma]$ and the same Abelian counting weight while representing different ordered sequences of stochastic events.

Topology plays an important role in stochastic counting. Geometric phases of stochastic generating functions generate pump currents \cite{SinitsynNemenman2007,Sinitsyn2009}, and in adiabatically driven networks at low temperature, such currents can exhibit robust integer or fractional quantization \cite{ChernyakKleinSinitsyn2012}. Characteristic-class formulations connect these effects to the homology of stochastic networks and to bundles over spaces of counting and driving parameters \cite{ChernyakKleinSinitsyn2013}. Special degeneracies of counting operators can also produce topological changes in current statistics \cite{ChernyakSinitsyn2010}, and braid-group structures can arise in nonequilibrium counting spectra \cite{RenSinitsyn2013}. In these constructions, however, the counting variables themselves remain Abelian.

Here we replace commuting counting phases by noncommuting rotations of an auxiliary spin. For a two-loop graph, a positive crossing associated with the first fundamental cycle applies $R_x(\alpha)\in\SO(3)$, while a crossing associated with the second applies $R_y(\theta)$; reverse crossings apply the inverse rotations. Since $R_x(\alpha)$ and $R_y(\theta)$ do not commute in general, the final spin depends on the ordered word of counted events. An $A$ event followed by a $B$ event and the reverse ordering can therefore produce different counter states even when both trajectories have identical net cycle currents.

Spin degrees of freedom have long been used as counting detectors. In the Levitov--Lee--Lesovik formulation, an auxiliary spin coupled to electronic current records transferred charge through its precession \cite{LevitovLeeLesovik1996}. Statistics of noncommuting spin observables have also been studied in quantum transport \cite{DiLorenzoCampagnanoNazarov2006}. The construction here is different in that the underlying dynamics are entirely classical and Markovian, while the noncommuting rotations are used specifically to resolve ordering information in stochastic network trajectories. Mathematically, for a two-loop graph with $\pi_1(G)\cong F_2=\langle a,b\rangle$, ordinary cycle-current counting factors through the abelianization $F_2\to H_1(G;\mathbb Z)\cong\mathbb Z^2$, whereas the spin counter uses the family of representations
\begin{equation}
    \rho_{\alpha,\theta}:F_2\longrightarrow\SO(3),
    \qquad
    \rho_{\alpha,\theta}(a)=R_x(\alpha),
    \qquad
    \rho_{\alpha,\theta}(b)=R_y(\theta).
\end{equation}
Related twisted graph constructions with flat unitary line-bundle coefficients provide an Abelian analogue in which edge transports carry $U(1)$ phases \cite{CatanzaroChernyakKlein2013}.

The two periodic rotation angles form a counting torus $\mathbb T^2$. Let $m_T(\alpha,\theta)$ be the finite-time mean spin. Whenever $m_T$ is nonzero on the entire torus, the normalized polarization
\begin{equation}
    n_T(\alpha,\theta)
    =
    \frac{m_T(\alpha,\theta)}{\|m_T(\alpha,\theta)\|}
\end{equation}
defines a map $\mathbb T^2\to\mathbb S^2$. Equivalently, the averaged spin density matrix has a nondegenerate dominant eigenline over the counting torus, and its first Chern number is the degree of $n_T$.  We use the standard language of vector bundles and characteristic classes throughout \cite{Husemoller1994,MilnorStasheff1974}. This topology belongs to the finite-time counter state rather than to the physical Markov generator itself, distinguishing the present construction from topological classifications of Markov generators and configuration-space networks \cite{MuruganVaikuntanathan2017,TangAgudoCanalejoGolestanian2021}.

Our main analytical result connects this non-Abelian Chern number to an ordinary parity statistic without reducing the full counting problem to an Abelian one. Reflection symmetry equips the dominant eigenline with a real structure over the involutive counting torus. Let
\[
C_0=\{\alpha=0\},
\qquad
C_\pi=\{\alpha=\pi\}
\]
denote the two circles fixed by reflection, and let $\ell_0\to C_0$ and $\ell_\pi\to C_\pi$ be the corresponding real eigenline bundles. Their first Stiefel--Whitney classes
\[
w_1(\ell_a)\in H^1(C_a;\mathbb Z_2),
\qquad a\in\{0,\pi\},
\]
measure whether the real eigenline returns with the same or opposite sign after one circuit of $C_a$. Let
\[
[C_a]\in H_1(C_a;\mathbb Z_2)
\]
denote the fundamental homology class of the circle, the pairing
$\langle w_1(\ell_a),[C_a]\rangle\in\mathbb Z_2$ evaluates the cohomology class $w_1(\ell_a)$ on that loop. Such pairing is $0$ for a trivial real line bundle and $1$ for a M\"obius-type line bundle. The Chern number then obeys
\begin{equation}
C_T\equiv
\left\langle w_1(\ell_0),[C_0]\right\rangle
+
\left\langle w_1(\ell_\pi),[C_\pi]\right\rangle
\pmod2.
\end{equation}
If, in addition, the $x$-component of the spin polarization does not vanish along either reflection-fixed circle except at $\theta = 0,\pi$, these $w_1$ invariants are determined by the four high-symmetry polarizations. At $(\alpha,\theta)\in\{0,\pi\}^2$, those polarizations are exactly the conventional parity-generating functions $\mathbb E[(-1)^{\eta_1Q_1+\eta_2Q_2}]$. Thus symmetry relates an Abelian parity observable to the mod-$2$ reduction of the first Chern number of a non-Abelian two-parameter counting family.

For the five-state figure-eight network studied below, changing only the observation time produces four polarization-gap closings and the reentrant sequence
\begin{equation}
    C_T:
    0\longrightarrow-1\longrightarrow0\longrightarrow-1\longrightarrow0.
\end{equation}
Every transition occurs at $(\pi,\pi)$, where
\begin{equation}
Z_T(\pi,\pi)=\mathbb E[(-1)^{Q_1(T)+Q_2(T)}]
\end{equation}
changes sign. The Chern--parity theorem therefore predicts the even/odd alternation of the topological sectors, while an independent two-dimensional degree calculation fixes the integer values and their sign. The finite-time distribution of ordered trajectory words evolves, while the stochastic generator, stationary state, and rotation rule all remain unchanged across the transitions.

The paper is organized around three results. First, we derive an exact linear evolution equation for the mean non-Abelian counter spin. Second, we identify its finite-time Chern number and prove the Chern--parity correspondence described above. Third, we resolve four finite-time topological transitions numerically and compare their critical times with relaxation, accumulated activity, and complete-cycle first-passage scales. Together these results show that observation time can act as a control parameter for topology in the ordered stochastic path ensemble.

\section{Two-loop nonequilibrium Markov network}
\label{sec:model}

Consider the five-state figure-eight graph in \cref{fig:graph}, with oriented cycles
\begin{equation}
    A:0\to1\to2\to0, \qquad B:0\to3\to4\to0.
\end{equation}
The forward and reverse transition rates on cycle $A$ are denoted by $k_{1,+}$ and $k_{1,-}$, while those on cycle $B$ are $k_{2,+}$ and $k_{2,-}$.

\begin{figure}[t]
\centering
\begin{tikzpicture}[
    x=1.45cm,
    y=1.25cm,
    state/.style={circle,draw,minimum size=9mm,inner sep=0pt,font=\normalsize},
    arr/.style={-{Latex[length=2.3mm]},semithick},
    note/.style={font=\small,align=center,fill=white,inner sep=2pt}
]
\node[state] (0) at (0,0) {$0$};
\node[state] (1) at (-3.8,1.55) {$1$};
\node[state] (2) at (-3.8,-1.55) {$2$};
\node[state] (3) at (3.8,1.55) {$3$};
\node[state] (4) at (3.8,-1.55) {$4$};

\draw[arr,bend left=14] (0) to (1);
\draw[arr,bend left=14] (1) to (0);
\draw[arr,bend left=14] (1) to (2);
\draw[arr,bend left=14] (2) to (1);
\draw[arr,bend left=14] (2) to (0);
\draw[arr,bend left=14] (0) to (2);

\draw[arr,bend left=14] (0) to (3);
\draw[arr,bend left=14] (3) to (0);
\draw[arr,bend left=14] (3) to (4);
\draw[arr,bend left=14] (4) to (3);
\draw[arr,bend left=14] (4) to (0);
\draw[arr,bend left=14] (0) to (4);

\node[note] at (-3.8,2.35) {$A$: forward $k_{1,+}$\\reverse $k_{1,-}$};
\node[note] at (3.8,2.35) {$B$: forward $k_{2,+}$\\reverse $k_{2,-}$};

\node[note,anchor=north east] at (-1.25,-0.72)
{$2\to0:\ R_x(\alpha)$\\$0\to2:\ R_x(-\alpha)$};
\node[note,anchor=north west] at (1.25,-0.72)
{$4\to0:\ R_y(\theta)$\\$0\to4:\ R_y(-\theta)$};
\end{tikzpicture}
\caption{Two-loop Markov network and the counting gauge. On cycle $A$ the forward orientation $0\to1\to2\to0$ has rate $k_{1,+}$ and the reverse orientation has rate $k_{1,-}$; cycle $B$ is defined analogously. The auxiliary spin rotates only on the two distinguished closing edges, with inverse rotations on reverse jumps. The counter is passive and does not modify the transition rates.}
\label{fig:graph}
\end{figure}

With the column-vector convention, the Markov generator is
\begin{equation}
    L_{ji}=k_{j\leftarrow i}\quad(i\neq j), \qquad L_{ii}=-\sum_{j\neq i}k_{j\leftarrow i}, \qquad \dot p=Lp.
\end{equation}
Although the forward and reverse rates need not be equal, the rate pattern is balanced at every vertex since the total incoming rate equals the total outgoing rate. Consequently, the stationary distribution is uniform,
\begin{equation}
 p_i^{\mathrm{ss}}=\frac15,
 \qquad i=0,\dots,4.
 \label{eq:uniform_stationary}
\end{equation}
When $k_{i,+}\neq k_{i,-}$, this uniform stationary state nevertheless carries a nonzero circulating probability current around cycle $i$.

To represent the two independent cycle currents, we choose the edges $2\leftrightarrow0$ and $4\leftrightarrow0$ as the two chords associated with the fundamental cycles $0\to1\to2\to0$ and $0\to3\to4\to0$, respectively.  We define the corresponding integrated currents as the net numbers of oriented jumps across these edges
\begin{equation}
Q_1(T)=N_{2\to0}(T)-N_{0\to2}(T),
\qquad
Q_2(T)=N_{4\to0}(T)-N_{0\to4}(T).
\label{eq:cycle_counts}
\end{equation}
As in \cref{eq::int_val_currents}, the vector form is
\begin{equation}
    \bm Q(T)=\bigl(Q_1(T),Q_2(T)\bigr)\in\mathbb Z^2.
\end{equation}
Ordinary Abelian full counting statistics would encode these currents through the scalar factor
\begin{equation}
\exp\left[i\bigl(\chi_1Q_1(T)+\chi_2Q_2(T)\bigr)
\right],
\qquad
(\chi_1,\chi_2)\in\mathbb T^2.
\label{eq:abelian_counting_factor}
\end{equation}
The non-Abelian counter introduced below replaces these commuting scalar phases by noncommuting spin rotations.

In the stationary ensemble,
\begin{equation}
J_i=\frac{k_{i,+}-k_{i,-}}5,
\qquad
\E[Q_i(T)]=J_iT,
\qquad
i = 1,2.
\label{eq:cycle_currents}
\end{equation}
The thermodynamic bias around cycle $A$ is measured by its cycle affinity, defined as the logarithm of the product of forward rates divided by the product of reverse rates around that cycle. In this example,
\begin{equation}
\mathcal A_i=3\log\frac{k_{i,+}}{k_{i,-}}, \qquad i = 1,2
\label{eq:affinity}
\end{equation}
are nonzero, confirming that the stationary process violates detailed balance. The auxiliary spin records additional information about the ordered trajectory while leaving these ordinary current and thermodynamic quantities unchanged.

\section{Non-Abelian spin counting}
\label{sec:spin_counting}

Let $s_0=e_z=(0,0,1)^\mathsf T$ and define
\begin{equation}
R_x(\alpha)=
\begin{pmatrix}
1&0&0\\
0&\cos\alpha&-\sin\alpha\\
0&\sin\alpha&\cos\alpha
\end{pmatrix},
\qquad
R_y(\theta)=
\begin{pmatrix}
\cos\theta&0&\sin\theta\\
0&1&0\\
-\sin\theta&0&\cos\theta
\end{pmatrix}.
\label{eq:rotations}
\end{equation}
The jumps $2\to0$ and $4\to0$ carry $R_x(\alpha)$ and $R_y(\theta)$, while the reverse jumps carry their inverses. All other jumps carry the identity.

For a trajectory $\gamma$ with jump sequence $e_1,\ldots,e_N$, let
\begin{equation}
U_\gamma(\alpha,\theta)=R_{e_N}\cdots R_{e_1},
\qquad
s_T(\gamma)=U_\gamma(\alpha,\theta)s_0.
\label{eq:path_ordered_rotation}
\end{equation}
The ordering matters. A single $A$ traversal followed by a $B$ traversal gives
\begin{equation}
s_{AB}=R_y(\theta)R_x(\alpha)e_z
=\begin{pmatrix}
\sin\theta\cos\alpha\\
-\sin\alpha\\
\cos\theta\cos\alpha
\end{pmatrix},
\label{eq:sAB}
\end{equation}
whereas
\begin{equation}
s_{BA}=R_x(\alpha)R_y(\theta)e_z
=\begin{pmatrix}
\sin\theta\\
-\sin\alpha\cos\theta\\
\cos\alpha\cos\theta
\end{pmatrix}.
\label{eq:sBA}
\end{equation}
Both trajectories have $(Q_1,Q_2)=(1,1)$, but $s_{AB}\neq s_{BA}$ for generic angles.

\begin{definition}
For fixed $T$, $\alpha$, and $\theta$, the spin-counting measure is
\begin{equation}
\mu_T^{\alpha,\theta}(B)=\Prob(s_T\in B),
\qquad B\subset\Sph^2.
\label{eq:spin_measure}
\end{equation}
Its first moment is
\begin{equation}
m_T(\alpha,\theta)=\int_{\Sph^2}s\,\mu_T^{\alpha,\theta}(\dd s)=\E[s_T].
\label{eq:mean_spin}
\end{equation}
\end{definition}

The complete spin-counting statistic at fixed $(T,\alpha,\theta)$ is $\mu_T^{\alpha,\theta}$. We focus on its first moment because it is closed under the block dynamics and because it is precisely the Bloch vector of the averaged counter-spin density matrix. Define state-resolved spin moments
\begin{equation}
g_i(t)=\E\!\left[s_t\,\mathbf1_{\{X_t=i\}}\right]\in\R^3,
\qquad
g=(g_0,\ldots,g_4)^\mathsf T\in\R^{15}.
\label{eq:state_resolved_moment}
\end{equation}

\begin{proposition}
The vector $g(t)$ satisfies
\begin{equation}
\dot g=\mathbb L_{\alpha,\theta}g,
\label{eq:block_evolution}
\end{equation}
with $3\times3$ blocks
\begin{equation}
(\mathbb L_{\alpha,\theta})_{ji}
=k_{j\leftarrow i}R_{j\leftarrow i}\quad(j\neq i),
\qquad
(\mathbb L_{\alpha,\theta})_{ii}
=-\left(\sum_{j\neq i}k_{j\leftarrow i}\right)\Id_3.
\label{eq:block_generator}
\end{equation}
For initial distribution $p(0)$ and deterministic spin $s_0$,
\begin{equation}
g_i(0)=p_i(0)s_0,
\qquad
m_T=\sum_{i=0}^{4}g_i(T).
\label{eq:block_initial}
\end{equation}
\end{proposition}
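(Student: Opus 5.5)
The plan is to derive the block master equation from the joint Markov process $(X_t,s_t)$ by a short-time (Dynkin/generator) argument, using that $s_t$ changes only at jumps and deterministically: at a jump $i\to j$ it is left-multiplied by $R_{j\leftarrow i}$. The pair $(X_t,s_t)$ is therefore a Markov process on $\{0,\dots,4\}\times\Sph^2$. For a bounded test function $f(x,s)$ its generator is
\begin{equation*}
(\mathcal G f)(i,s)=\sum_{j\neq i}k_{j\leftarrow i}\bigl[f(j,R_{j\leftarrow i}s)-f(i,s)\bigr].
\end{equation*}
For each fixed state $j$ and each coordinate vector $e_c$, I would apply this with the linear-in-$s$ observable $f_{j,c}(x,s)=\mathbf 1_{\{x=j\}}\,e_c^\mathsf T s$. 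Then $\E f_{j,c}(X_t,s_t)=e_c^\mathsf T g_j(t)$, and Dynkin's formula gives $\tfrac{\dd}{\dd t}\E f=\E[(\mathcal G f)(X_t,s_t)]$.

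Next I compute $\mathcal G f_{j,c}$ directly. On the event $\{X_t=i\}$ with $i\neq j$, only the jump to $j$ contributes, giving $k_{j\leftarrow i}\,e_c^\mathsf T R_{j\leftarrow i}s$. On $\{X_t=j\}$, every jump leaves $j$, giving $-\sum_{l\neq j}k_{l\leftarrow j}\,e_c^\mathsf T s$. Taking expectations and using linearity of $R_{j\leftarrow i}$ to move it outside the expectation, $\E[\mathbf 1_{\{X_t=i\}}R_{j\leftarrow i}s_t]=R_{j\leftarrow i}g_i(t)$. This yields
\begin{equation*}
\dot g_j=\sum_{i\neq j}k_{j\leftarrow i}R_{j\leftarrow i}\,g_i-\Bigl(\sum_{l\neq j}k_{l\leftarrow j}\Bigr)g_j,
\end{equation*}
which is exactly \cref{eq:block_evolution} with blocks \cref{eq:block_generator}.

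If one prefers to avoid invoking the general generator formula, the same identity can be obtained by conditioning on $\mathcal F_t$ and expanding $g_j(t+h)$ to first order in $h$. The probability of no jump is $1-h\sum_l k_{l\leftarrow X_t}+o(h)$, of exactly one jump $i\to j$ is $hk_{j\leftarrow i}+o(h)$, and of two or more jumps is $O(h^2)$. The $o(h)$ remainder is controlled uniformly because $\|s_t\|=1$, since rotations preserve norm, and the state space is finite. This uniform boundedness is the only point that needs care, and it is where I expect the (mild) main obstacle: justifying the interchange of limit and expectation. Compactness of $\Sph^2$ and finiteness of the rates settle it.

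Finally, the initial condition follows because $s_0$ is deterministic: $g_i(0)=\E[s_0\mathbf 1_{\{X_0=i\}}]=p_i(0)s_0$. Since $\sum_i\mathbf 1_{\{X_T=i\}}=1$, we get $m_T=\E[s_T]=\sum_i g_i(T)$, giving \cref{eq:block_initial}. As a consistency check, replacing every $R$ by $\Id_3$ reduces the system to three decoupled copies of $\dot p=Lp$.
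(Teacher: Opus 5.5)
Your proof is correct and is essentially the paper's argument. The paper conditions on the state at time $t$ and expands $g_j(t+\dd t)$ to first order, using jump weight $k_{j\leftarrow i}\,\dd t$ and no-jump weight $1-r_i\,\dd t$; this is exactly your short-time route, and your Dynkin-formula version with the test functions $\mathbf 1_{\{x=j\}}e_c^\mathsf T s$ is the same computation written in terms of the generator of the joint process $(X_t,s_t)$, while your checks of $g_i(0)=p_i(0)s_0$, $m_T=\sum_i g_i(T)$, and the uniform bound from $\norm{s_t}=1$ spell out details the paper leaves implicit.
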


\begin{proof}
Condition on the state at time $t$ and on whether a jump occurs during $[t,t+\dd t]$. A jump $i\to j$ contributes $k_{j\leftarrow i}\dd t\,R_{j\leftarrow i}g_i(t)$ to $g_j(t+\dd t)$, whereas no jump from $i$ contributes $(1-r_i\dd t)g_i(t)$ with $r_i=\sum_{j\neq i}k_{j\leftarrow i}$. Subtracting $g(t)$, dividing by $\dd t$, and taking the limit gives \cref{eq:block_evolution,eq:block_generator}.
\end{proof}

Hence
\begin{equation}
g(T)=e^{T\mathbb L_{\alpha,\theta}}g(0),
\qquad
m_T(\alpha,\theta)=\mathcal E e^{T\mathbb L_{\alpha,\theta}}g(0),
\label{eq:matrix_exponential}
\end{equation}
where $\mathcal E=(\Id_3\ \Id_3\ \Id_3\ \Id_3\ \Id_3)$. Because the entries of $\mathbb L_{\alpha,\theta}$ are trigonometric polynomials, $m_T$ is real analytic and $2\pi$-periodic in both counting angles.

\section{Counter-spin density matrix and Chern number}
\label{sec:chern}

Lift the rotations to $\SU(2)$,
\begin{equation}
U_x(\alpha)=e^{-i\alpha\sigma_x/2},
\qquad
U_y(\theta)=e^{-i\theta\sigma_y/2}.
\label{eq:su2_lifts}
\end{equation}
Starting from $\varrho_0=(\Id_2+e_z\cdot\bm\sigma)/2$, each trajectory yields $\varrho_\gamma=U_\gamma\varrho_0U_\gamma^\dagger$. Averaging gives
\begin{equation}
\varrho_T(\alpha,\theta)
=\E[\varrho_\gamma]
=\frac12\left(\Id_2+m_T(\alpha,\theta)\cdot\bm\sigma\right).
\label{eq:averaged_density_matrix}
\end{equation}
Its eigenvalues are
\begin{equation}
p_\pm=\frac{1\pm\norm{m_T}}2,
\label{eq:density_eigenvalues}
\end{equation}
so the polarization gap is
\begin{equation}
\Delta_{\mathrm{pol}}(T)
=\min_{(\alpha,\theta)\in\T^2}\norm{m_T(\alpha,\theta)}.
\label{eq:polarization_gap}
\end{equation}
Whenever $\Delta_{\mathrm{pol}}(T)>0$, define
\begin{equation}
n_T=\frac{m_T}{\norm{m_T}},
\qquad
P_+=\frac12(\Id_2+n_T\cdot\bm\sigma).
\label{eq:normalized_texture}
\end{equation}
Then $P_+$ defines a complex line bundle over the counting torus \cite{Husemoller1994}.

\begin{proposition}
If $\Delta_{\mathrm{pol}}(T)>0$, then
\begin{align}
C_T
&=\frac{1}{2\pi i}\int_{\T^2}\Tr(P_+\,\dd P_+\wedge\dd P_+)\\
&=\frac1{4\pi}\int_0^{2\pi}\!\int_0^{2\pi}
n_T\cdot(\partial_\alpha n_T\times\partial_\theta n_T)
\,\dd\alpha\,\dd\theta
=\deg(n_T)\in\Z.
\label{eq:degree_formula}
\end{align}
\end{proposition}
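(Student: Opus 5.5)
Since $\Delta_{\mathrm{pol}}(T)>0$ and $m_T$ is real analytic and doubly $2\pi$-periodic by \cref{eq:matrix_exponential}, the map $n_T=m_T/\norm{m_T}$ is smooth from $\T^2$ to $\Sph^2$. The projector $P_+=\tfrac12(\Id_2+n_T\cdot\bm\sigma)$ is therefore a smooth rank-one projector, and its image defines a smooth complex line bundle. It is the pullback $n_T^*\mathcal L$ of the tautological eigenline bundle $\mathcal L\to\Sph^2$, whose fiber over $n$ is the $+1$ eigenspace of $n\cdot\bm\sigma$. The plan has three steps. First, establish the pointwise identity between the projector curvature form and the solid-angle density. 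Second, integrate it to obtain the middle expression. Third, identify that integral with $\deg(n_T)$ by the change-of-variables property of the degree.

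\textbf{Step 1 (algebraic identity).} Write $P=\tfrac12(\Id+n\cdot\bm\sigma)$, so that $\dd P=\tfrac12\,\dd n\cdot\bm\sigma$ and
\[
\dd P\wedge\dd P=\tfrac14\sum_{a,b}\sigma_a\sigma_b\,\dd n_a\wedge\dd n_b=\tfrac{i}{4}\sum_c\sigma_c\,(\dd n\wedge\dd n)_c .
\]
Here the symmetric part $\delta_{ab}$ drops out by antisymmetry of the wedge, and we write $(\dd n\wedge\dd n)_c=\epsilon_{abc}\,\dd n_a\wedge\dd n_b$. Multiplying by $P$ and using $\Tr\sigma_c=0$ and $\Tr(\sigma_a\sigma_c)=2\delta_{ac}$ gives
\[
\Tr(P\,\dd P\wedge\dd P)=\tfrac{i}{4}\,n\cdot(\dd n\wedge\dd n).
\]
In coordinates, $n\cdot(\dd n\wedge\dd n)=2\,n\cdot(\partial_\alpha n\times\partial_\theta n)\,\dd\alpha\wedge\dd\theta$. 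Hence
\[
\frac{1}{2\pi i}\Tr(P\,\dd P\wedge\dd P)=\frac{1}{4\pi}\,n\cdot(\partial_\alpha n\times\partial_\theta n)\,\dd\alpha\wedge\dd\theta .
\]
This yields the first equality of \cref{eq:degree_formula}, up to a global orientation and sign convention. I will fix that convention so that the two expressions agree as written, noting that the standard Berry curvature of $\mathcal L$ carries a sign that depends on convention.

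\textbf{Step 2 (degree).} The form $\omega=\tfrac{1}{4\pi}\,n\cdot(\dd n\times\dd n)/2$ is the pullback under $n_T$ of the normalized area form $\mathrm{vol}_{\Sph^2}/4\pi$, which has total integral $1$. For any smooth map $f:\T^2\to\Sph^2$ between closed oriented surfaces, $\int_{\T^2}f^*\eta=\deg(f)\int_{\Sph^2}\eta$. The quickest route is to pick a regular value $y$ and use a form $\eta$ supported near $y$, with local orientation signs counted via the Jacobian sign, which equals the sign of $n\cdot(\partial_\alpha n\times\partial_\theta n)$; de Rham cohomology then gives independence of the choice of $\eta$. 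This gives the second equality, and integrality follows because the degree is an integer. Equivalently, one can say $C_T=\langle c_1(n_T^*\mathcal L),[\T^2]\rangle=\deg(n_T)\,\langle c_1(\mathcal L),[\Sph^2]\rangle$ with $|c_1(\mathcal L)[\Sph^2]|=1$.

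The main obstacle is bookkeeping rather than substance. The sign and orientation conventions must be fixed consistently: the orientation $\dd\alpha\wedge\dd\theta$ on $\T^2$, the $\tfrac{1}{2\pi i}$ versus $\tfrac{i}{2\pi}$ normalization, and the choice of the $+$ eigenline. Only with these fixed do all three expressions agree as written rather than up to sign. I will fix the torus orientation by $(\alpha,\theta)$ and verify the sign once on the identity map $\Sph^2\to\Sph^2$ via a standard chart, then propagate that choice through Steps 1 and 2.
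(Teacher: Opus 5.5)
Your proposal is correct and follows the same route as the paper. The paper's proof likewise uses $\sigma_a\sigma_b=\delta_{ab}\Id_2+i\epsilon_{abc}\sigma_c$ to turn $\Tr(P_+\,\dd P_+\wedge\dd P_+)$ into a multiple of the pulled-back area form on $\Sph^2$, and then identifies its integral with the degree. Your Step 1 computation already gives the first equality exactly, with the $\tfrac{1}{2\pi i}$ normalization and the orientation $\dd\alpha\wedge\dd\theta$, so no convention needs to be chosen; a sign genuinely enters only in your aside $C_T=\langle c_1(n_T^*\mathcal L),[\T^2]\rangle$, because in the Chern--Weil convention $c_1=[\tfrac{i}{2\pi}\Tr F]$ the $+1$ eigenline bundle is the tautological bundle with $\langle c_1(\mathcal L),[\Sph^2]\rangle=-1$, which gives $\langle c_1(n_T^*\mathcal L),[\T^2]\rangle=-C_T$.
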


\begin{proof}
Insert \cref{eq:normalized_texture} and use $\sigma_a\sigma_b=\delta_{ab}\Id_2+i\epsilon_{abc}\sigma_c$. The resulting curvature is the pullback of the normalized area form on $\Sph^2$, as in the standard two-band Chern construction \cite{SticletEtAl2012}.
\end{proof}
Thus $(\alpha,\theta)$ may be viewed as synthetic Brillouin-zone coordinates and $n_T$ as a polarization texture.

\begin{corollary}
If $C_T$ changes as $T$ varies continuously, then there exists a critical time $T_c$ and counting angles $(\alpha_c,\theta_c)$ such that
\begin{equation}
m_{T_c}(\alpha_c,\theta_c)=0,
\qquad
\Delta_{\mathrm{pol}}(T_c)=0.
\label{eq:gap_closing}
\end{equation}
At that point $p_+=p_-=1/2$, so the dominant eigenline ceases to be defined.
\end{corollary}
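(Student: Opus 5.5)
We argue by contraposition: if the polarization gap stays positive on a closed interval of observation times, then $C_T$ is constant there. The argument uses only continuity of $m_T$ in all three variables and the homotopy invariance of degree.

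The first step is joint continuity. By \cref{eq:matrix_exponential}, $m_T(\alpha,\theta)=\mathcal E e^{T\mathbb L_{\alpha,\theta}}g(0)$. The entries of $\mathbb L_{\alpha,\theta}$ are trigonometric polynomials in the counting angles, and the matrix exponential is real analytic in $(T,\mathbb L)$. Hence $m$ is real analytic, and in particular jointly continuous, as a map $[0,\infty)\times\T^2\to\R^3$. It follows that the function $T\mapsto\Delta_{\mathrm{pol}}(T)=\min_{\T^2}\norm{m_T}$ is continuous. Indeed, the minimum over the compact set $\T^2$ of a jointly continuous function is continuous in the parameter, since $m$ is uniformly continuous on $[T_1,T_2]\times\T^2$.

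For the main step, suppose $C_{T_1}\neq C_{T_2}$ for some $T_1<T_2$, where both values are defined. Assume, for contradiction, that $m_T(\alpha,\theta)\neq0$ for every $(T,\alpha,\theta)\in[T_1,T_2]\times\T^2$. Then $H(T,\alpha,\theta)=m_T/\norm{m_T}$ is a continuous map $[T_1,T_2]\times\T^2\to\Sph^2$. This is a homotopy between $n_{T_1}$ and $n_{T_2}$. Degree is a homotopy invariant, so $\deg n_{T_1}=\deg n_{T_2}$, and the degree formula \cref{eq:degree_formula} gives $C_{T_1}=C_{T_2}$, a contradiction. Equivalently, $C_T$ is a locally constant integer-valued function on the open set $\{T:\Delta_{\mathrm{pol}}(T)>0\}$. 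Therefore $m$ must vanish somewhere in $[T_1,T_2]\times\T^2$. Call such a point $(T_c,\alpha_c,\theta_c)$; then $\Delta_{\mathrm{pol}}(T_c)=0$.

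To state the conclusion more precisely, let $T_c=\inf\{T\ge T_1:\Delta_{\mathrm{pol}}(T)=0\}$. This set is closed by continuity of $\Delta_{\mathrm{pol}}$, so the infimum is attained. Compactness of $\T^2$ then provides angles $(\alpha_c,\theta_c)$ with $m_{T_c}(\alpha_c,\theta_c)=0$. Finally, by \cref{eq:density_eigenvalues}, $\norm{m_{T_c}}=0$ at that point gives $p_+=p_-=1/2$. Thus $\varrho_{T_c}=\Id_2/2$ there, and $P_+$ together with the dominant eigenline is undefined.

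The only delicate point is the interpretation of ``$C_T$ changes'': $C_T$ is defined only where the gap is positive. The homotopy argument shows that any two times with different Chern numbers are separated by a gap-closing time, which is exactly \cref{eq:gap_closing}.
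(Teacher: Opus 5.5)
Your proposal is correct and follows the paper's route. The paper gives no separate proof and treats the corollary as an immediate consequence of $C_T=\deg(n_T)\in\Z$; your homotopy $m_T/\norm{m_T}$ on $[T_1,T_2]\times\T^2$, together with the continuity of $\Delta_{\mathrm{pol}}$ and the eigenvalue formula giving $p_+=p_-=1/2$, supplies exactly the justification the paper leaves implicit.
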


\section{Reflection symmetry and the Chern--parity correspondence}
\label{sec:symmetry}

Write
\begin{equation}
m_T(\alpha,\theta)=\bigl(X_T(\alpha,\theta),Y_T(\alpha,\theta),Z_T(\alpha,\theta)\bigr).
\label{eq:XYZ}
\end{equation}
Let
\[
S_\alpha=\diag(1,-1,1),
\qquad
S_\theta=\diag(-1,1,1),
\]
and lift them to $\R^{15}$ by $\widehat S_\alpha=\Id_5\otimes S_\alpha$ and $\widehat S_\theta=\Id_5\otimes S_\theta$.

\begin{lemma}
\label{lem:reflection}
The block generator obeys
\begin{equation}
\mathbb L_{-\alpha,\theta}
=\widehat S_\alpha\mathbb L_{\alpha,\theta}\widehat S_\alpha,
\qquad
\mathbb L_{\alpha,-\theta}
=\widehat S_\theta\mathbb L_{\alpha,\theta}\widehat S_\theta.
\label{eq:generator_reflections}
\end{equation}
For $s_0=e_z$,
\begin{equation}
m_T(-\alpha,\theta)=S_\alpha m_T(\alpha,\theta),
\qquad
m_T(\alpha,-\theta)=S_\theta m_T(\alpha,\theta).
\label{eq:mean_spin_reflections}
\end{equation}
Consequently there are real-analytic functions $F_T,G_T,H_T$ such that
\begin{align}
X_T(\alpha,\theta)&=\sin\theta\,F_T(\cos\alpha,\cos\theta),\label{eq:X_factor}\\
Y_T(\alpha,\theta)&=-\sin\alpha\,G_T(\cos\alpha,\cos\theta),\label{eq:Y_factor}\\
Z_T(\alpha,\theta)&=H_T(\cos\alpha,\cos\theta).\label{eq:Z_factor}
\end{align}
\end{lemma}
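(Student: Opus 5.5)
We first prove \cref{eq:generator_reflections} at the level of the $3\times3$ rotations, then lift it to the block generator, then to the mean spin, and finally extract the factorized forms from parity and periodicity.

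\textbf{Step 1: conjugation of rotations.} Direct matrix multiplication from \cref{eq:rotations} gives
\[
S_\alpha R_x(\alpha)S_\alpha=R_x(-\alpha),\qquad S_\alpha R_y(\theta)S_\alpha=R_y(\theta),
\]
\[
S_\theta R_y(\theta)S_\theta=R_y(-\theta),\qquad S_\theta R_x(\alpha)S_\theta=R_x(\alpha).
\]
Geometrically, $S_\alpha$ is the reflection $y\mapsto-y$. It commutes with rotations about $y$ and reverses rotations about $x$; $S_\theta$ behaves the same way with $x$ and $y$ interchanged. The identity blocks and the diagonal blocks $-r_i\Id_3$ commute with $S_\alpha$ and $S_\theta$. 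Since $\widehat S_\alpha=\Id_5\otimes S_\alpha$ acts blockwise, conjugating $\mathbb L_{\alpha,\theta}$ by $\widehat S_\alpha$ replaces each off-diagonal block $k_{j\leftarrow i}R_{j\leftarrow i}$ by $k_{j\leftarrow i}S_\alpha R_{j\leftarrow i}S_\alpha$. This flips $\alpha\to-\alpha$ on the edges $2\leftrightarrow0$ and leaves every other block unchanged. That proves \cref{eq:generator_reflections}.

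\textbf{Step 2: mean spin.} From \cref{eq:matrix_exponential} and $\widehat S_\alpha^2=\Id$,
\[
e^{T\mathbb L_{-\alpha,\theta}}=\widehat S_\alpha e^{T\mathbb L_{\alpha,\theta}}\widehat S_\alpha .
\]
The initial vector satisfies $\widehat S_\alpha g(0)=g(0)$, because $g_i(0)=p_i(0)e_z$ and $S_\alpha e_z=e_z$. Also $\mathcal E\widehat S_\alpha=S_\alpha\mathcal E$. Hence $m_T(-\alpha,\theta)=S_\alpha m_T(\alpha,\theta)$, and the $\theta$ reflection is handled identically. In components, $X_T$ and $Z_T$ are even in $\alpha$ while $Y_T$ is odd in $\alpha$. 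Likewise $Y_T$ and $Z_T$ are even in $\theta$ while $X_T$ is odd in $\theta$.

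\textbf{Step 3: factorization (main obstacle).} This step turns parity plus $2\pi$-periodicity plus analyticity into functions of $(\cos\alpha,\cos\theta)$. The cleanest route uses the fact that the entries of $\mathbb L_{\alpha,\theta}$ are affine in $\cos\alpha,\sin\alpha,\cos\theta,\sin\theta$. Therefore each entry of $e^{T\mathbb L}$ is given by an entire power series in these four variables. Reduce the series using $\sin^2=1-\cos^2$, so that every term is either even or odd in $\sin\alpha$, and similarly in $\sin\theta$. Parity in $\alpha$ kills the terms of the wrong type, and the same holds in $\theta$. For $Z_T$, which is even in both variables, only terms with no $\sin$ factor survive, giving $H_T(\cos\alpha,\cos\theta)$. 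For $X_T$, only terms carrying exactly one factor of $\sin\theta$ and none of $\sin\alpha$ survive, giving $X_T=\sin\theta\,F_T(\cos\alpha,\cos\theta)$. For $Y_T$, one factor of $\sin\alpha$ survives, giving $Y_T=-\sin\alpha\,G_T(\cos\alpha,\cos\theta)$, where the sign is a convention.

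Analyticity of $F_T$, $G_T$, $H_T$ then follows because they are convergent power series in $(c_\alpha,c_\theta)$. To make this rigorous, write $m_T$ as a polynomial-coefficient series in $(c_\alpha,s_\alpha,c_\theta,s_\theta)$ and take the projection onto the odd or even part in $s_\alpha$ and in $s_\theta$ separately. The care required is that the projection must be well defined on the real-algebraic torus $c^2+s^2=1$. The reduction to the normal form $A(c)+sB(c)$ before projecting is exactly what makes it well defined.
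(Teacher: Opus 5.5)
Your proposal is correct. Steps 1 and 2 are exactly the paper's argument: the blockwise conjugation identities for $R_x$ and $R_y$ under $S_\alpha$ and $S_\theta$, followed by $e^{T\widehat S\mathbb L\widehat S}=\widehat S e^{T\mathbb L}\widehat S$, $\widehat S g(0)=g(0)$ and $\mathcal E\widehat S=S\mathcal E$.

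You differ from the paper only in the factorization step. The paper just asserts that parity and periodicity imply it. That implicitly relies on the general fact that an even, $2\pi$-periodic, real-analytic function of an angle is a real-analytic function of its cosine, and an odd one is $\sin$ times such a function. Making this rigorous requires checking analyticity at $\cos=\pm1$ (for instance, via analyticity of $(\arccos c)^2$ near $c=1$), and doing so jointly in the two angles.

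You instead use the specific structure of the model. Because $\mathbb L_{\alpha,\theta}$ is affine in $(\cos\alpha,\sin\alpha,\cos\theta,\sin\theta)$, $m_T$ is the restriction of an entire function of four variables. This function reduces to the normal form
\[
A+\sin\alpha\,B+\sin\theta\,C+\sin\alpha\sin\theta\,D,
\]
with $A,\dots,D$ entire in $(\cos\alpha,\cos\theta)$. The parities then select one term per component. This route sidesteps the endpoint issue and gives the stronger conclusion that $F_T,G_T,H_T$ are entire. The paper's route, by contrast, uses only the symmetry and analyticity of $m_T$, not the trigonometric-polynomial form of the generator, so it applies more generally.

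One simplification: you need not worry about whether the projection is well defined on the torus. Averaging a component over $(\pm\alpha,\pm\theta)$ with the signs dictated by its parities is defined directly on $\T^2$. Applied to the normal form, it returns exactly $A_z$, $\sin\theta\,C_x$ and $\sin\alpha\,B_y$. Uniqueness of the normal form is therefore never needed.
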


\begin{proof}
The rotation matrices satisfy $S_\alpha R_x(\alpha)S_\alpha=R_x(-\alpha)$ and $S_\alpha R_y(\theta)S_\alpha=R_y(\theta)$, with the analogous relations for $S_\theta$. Applying these identities blockwise gives \cref{eq:generator_reflections}. Both reflections fix $e_z$, so \cref{eq:mean_spin_reflections} follows from \cref{eq:matrix_exponential}. Parity and periodicity then imply the factorizations \cref{eq:X_factor,eq:Y_factor,eq:Z_factor}.
\end{proof}

The reflection-fixed circles for $\alpha\mapsto-\alpha$ are $\alpha=0$ and $\alpha=\pi$. Their intersections with the corresponding $\theta$-fixed sets give the four high-symmetry points
\begin{equation}
(0,0),\quad(\pi,0),\quad(0,\pi),\quad(\pi,\pi).
\label{eq:four_corners}
\end{equation}
At these four points both transverse components vanish identically by symmetry, so the mean polarization lies on the $z$ axis. This observation gives the high-symmetry values a direct interpretation in ordinary full counting statistics.

\subsection{Current parity at the high-symmetry points}
\label{sec:parity_high_symmetry}

Define the ordinary two-current characteristic function
\begin{equation}
    \mathcal Z_T(\chi_1,\chi_2)
    =
    \mathbb E\!\left[
        e^{i(\chi_1 Q_1(T)+\chi_2 Q_2(T))}
    \right].
    \label{eq:abelian_fcs_characteristic}
\end{equation}
For an individual trajectory $\gamma$, let
\[
    s_T(\gamma;\alpha,\theta)\in S^2
\]
denote the final Bloch vector obtained by applying, in chronological order, the counter rotations associated with the counted transitions along $\gamma$, starting from $e_z$. The ensemble-averaged polarization introduced above is therefore
\begin{equation}
    m_T(\alpha,\theta)
    =
    \mathbb E_\gamma\!\left[
        s_T(\gamma;\alpha,\theta)
    \right],
\end{equation}
with
\[
    Z_T(\alpha,\theta)
    =
    e_z\cdot m_T(\alpha,\theta).
\]

At $\alpha,\theta\in\{0,\pi\}$, the counter rotations preserve the $z$ axis. In particular,
\begin{equation}
    R_x(\pi)e_z=-e_z,
    \qquad
    R_y(\pi)e_z=-e_z,
\end{equation}
and the inverse $\pi$ rotations act identically on $e_z$.  Hence, for $\eta_1,\eta_2\in\{0,1\}$, the trajectory-level counter state reduces to
\begin{equation}
    s_T\bigl(\gamma;\eta_1\pi,\eta_2\pi\bigr)
    =
    (-1)^{\eta_1 Q_1[\gamma]+\eta_2 Q_2[\gamma]}e_z.
    \label{eq:parity_spin_action}
\end{equation}
The signed currents may be used in the exponent because addition and subtraction agree modulo two.

Taking the ensemble average and using the definitions of $Z_T$ and $\mathcal Z_T$ then gives
\begin{equation}
    Z_T(\eta_1\pi,\eta_2\pi)
    =
    \mathcal Z_T(\eta_1\pi,\eta_2\pi),
    \label{eq:high_symmetry_abelian}
\end{equation}
so that
\begin{align}
    Z_T(0,0)&=1, \\
    Z_T(\pi,0)
    &=P(Q_1\ {\rm even})-P(Q_1\ {\rm odd}), \\
    Z_T(0,\pi)
    &=P(Q_2\ {\rm even})-P(Q_2\ {\rm odd}), \\
    Z_T(\pi,\pi)
    &=P(Q_1+Q_2\ {\rm even})-P(Q_1+Q_2\ {\rm odd}).
    \label{eq:joint_parity}
\end{align}
Thus, at the four high-symmetry points, the trajectory-level non-Abelian counter collapses to a parity sign, and the corresponding ensemble-averaged polarizations coincide exactly with ordinary $\chi=\pi$ parity characteristic functions. Away from these points, the noncommuting counter rotations retain information about the temporal ordering of counted transitions. The special role of $\chi=\pi$ in stochastic counting also appears in \cite{ChernyakSinitsyn2010,RenSinitsyn2013}.

\subsection{Main theorem: reflection reduction of Chern parity}
\label{sec:main_theorem}

In addition to identifying special counting points, the reflection symmetry also relates the parity of the first Chern number to one-dimensional topological data along the reflection-fixed sets.

Assume that $\Delta_{\mathrm{pol}}(T)>0$, so that $m_T(\alpha,\theta)$ is nonzero everywhere and $P_+$ defines a complex eigenline bundle
\[
    L_T=\operatorname{im}P_+
    \longrightarrow \mathbb T^2.
\]
The reflection
\begin{equation}
    \tau(\alpha,\theta)=(-\alpha,\theta)
    \label{eq:real_involution}
\end{equation}
acts on the counting torus. By \cref{lem:reflection}, $P_+(\tau(\alpha,\theta))$ is the complex conjugate of $P_+ (\alpha,\theta)$. Consequently, complex conjugation maps the eigenline over $(\alpha,\theta)$ antilinearly to the eigenline over $\tau(\alpha,\theta)$.

The fixed set of $\tau$ consists of the two circles \begin{equation}
    C_0=\{\alpha=0\},
    \qquad
    C_\pi=\{\alpha=\pi\}.
\end{equation}
Along either fixed circle the projector is real, so its image contains a distinguished real one-dimensional eigenspace. These eigenspaces form real line bundles
\[
    \ell_0\longrightarrow C_0,
    \qquad
    \ell_\pi\longrightarrow C_\pi.
\]
Each such real line bundle is characterized by its first Stiefel--Whitney class
\[
    w_1(\ell_a)\in H^1(C_a;\mathbb Z_2),
    \qquad a\in\{0,\pi\},
\]
which is the obstruction to orientability of the real line bundle \cite{MilnorStasheff1974}.
Let
\[
    [C_a]\in H_1(C_a;\mathbb Z_2)
\]
denote the fundamental homology class of the circle $C_a$. The pairing
\[
    \left\langle w_1(\ell_a),[C_a]\right\rangle
    \in\mathbb Z_2
\]
is the evaluation of the Stiefel--Whitney class on one circuit around the circle. Geometrically, it is $0$ if a continuous real eigenvector can be chosen to return to itself after one loop, and $1$ if it returns with the opposite sign, as for a M\"obius line bundle \cite{MilnorStasheff1974}.

\begin{theorem}[Chern--parity correspondence for the spin counter]
\label{thm:chern_parity}
For any observation time $T$ with $\Delta_{\mathrm{pol}}(T)>0$, the first Chern number satisfies
\begin{equation}
    C_T
    \equiv
    \left\langle w_1(\ell_0),[C_0]\right\rangle
    +
    \left\langle w_1(\ell_\pi),[C_\pi]\right\rangle
    \pmod2.
    \label{eq:chern_sw_relation}
\end{equation}
If, in addition,
\begin{equation}
    F_T(\cos a,\cos\theta)\neq0,
    \qquad
    a\in\{0,\pi\},\quad 0<\theta<\pi,
    \label{eq:fixed_circle_nonzero_F}
\end{equation}
so that the $x$ component of the polarization has no additional zeros along the reflection-fixed circles away from the high-symmetry points $\theta=0,\pi$, then
\begin{equation}
    \left\langle w_1(\ell_a),[C_a]\right\rangle
    =
    \frac{
        1-\operatorname{sgn}
        \!\left[Z_T(a,0)Z_T(a,\pi)\right]
    }{2}
    \pmod2,
    \label{eq:w1_sign_formula}
\end{equation}
and therefore
\begin{equation}
    (-1)^{C_T}
    =
    \operatorname{sgn}\!\left[
        Z_T(0,0)
        Z_T(0,\pi)
        Z_T(\pi,0)
        Z_T(\pi,\pi)
    \right].
    \label{eq:chern_parity_product}
\end{equation}
By \cref{eq:high_symmetry_abelian}, each factor on the right-hand side is an ordinary current-parity characteristic function.
\end{theorem}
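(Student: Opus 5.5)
The plan is to reduce everything to a mod-$2$ preimage count for the degree map $n_T:\T^2\to\Sph^2$, using $C_T=\deg(n_T)$ from the degree formula \cref{eq:degree_formula}. By \cref{lem:reflection}, $n_T\circ\tau=S_\alpha\circ n_T$. Both $\tau$ and $S_\alpha$ reverse orientation (of $\T^2$ and of $\Sph^2$, respectively), so the symmetry is compatible with the degree. The fixed circles $C_0,C_\pi$ of $\tau$ are mapped into the great circle $\Gamma=\{Y=0\}\subset\Sph^2$, which is the fixed set of $S_\alpha$. Indeed $Y_T=-\sin\alpha\,G_T$ vanishes at $\alpha\in\{0,\pi\}$.

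\textbf{Step 1 (degree mod 2 from the fixed circles).} Choose $v\in\Gamma$ that is a regular value of $n_T$. I will count $\deg(n_T)$ mod $2$ as $\#n_T^{-1}(v)$ mod $2$.
\begin{itemize}
\item A preimage $p$ with $\tau p\neq p$ is paired with $\tau p$, which is also a preimage because $S_\alpha v=v$. Its local degree is the same, because both $\tau$ and $S_\alpha$ reverse orientation. Hence off-circle preimages contribute an even number.
\item The remaining preimages lie on $C_0\cup C_\pi$. There $n_T|_{C_a}:C_a\to\Gamma\cong S^1$, and $\#(n_T|_{C_a})^{-1}(v)\equiv \operatorname{wind}(n_T|_{C_a})\pmod 2$, provided $v$ is also regular for the restricted maps.
\end{itemize}
Next, identify this winding mod $2$ with $\langle w_1(\ell_a),[C_a]\rangle$. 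Write $n_T=(\sin\varphi,0,\cos\varphi)$ on $C_a$. The real eigenvector of $P_+$ is $(\cos\frac\varphi2,\sin\frac\varphi2)^{\mathsf T}$, up to sign. It returns with sign $(-1)^{\operatorname{wind}}$ after one circuit, so $\ell_a$ is Möbius exactly when the winding is odd. Combining the two contributions gives \cref{eq:chern_sw_relation}.

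\textbf{Main obstacle: choosing a good $v$.} We need a $v$ on the one-dimensional set $\Gamma$ that is regular for $n_T$ on all of $\T^2$. Sard's theorem only guarantees that the critical values have measure zero in $\Sph^2$, which could in principle still contain all of $\Gamma$. At a point of $C_a$, $\partial_\theta n_T$ is tangent to $\Gamma$, while $\partial_\alpha n_T$ is proportional to $e_y$, because $X,Z$ are even in $\alpha$ and $Y$ is odd. So regularity there requires $\partial_\alpha Y_T\neq0$ together with regularity of the restricted loop.

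I would first handle this by a small $\tau$-equivariant perturbation. One option is $m\mapsto m+\epsilon\,(0,-\sin\alpha\,h,0)$ with generic even $h$, combined with a generic equivariant perturbation of the $X,Z$ components. This is chosen so that some $v\in\Gamma$ becomes regular, which is an equivariant transversality argument. Such a perturbation preserves the symmetry, and because $\Delta_{\mathrm{pol}}(T)>0$ it changes neither $\deg(n_T)$ nor the windings on the fixed circles. Alternatively, one can argue directly in $\Z_2$ cohomology via the real structure on $L_T$: the classical identity $c_1 \bmod 2 = w_1$ of the real part along the fixed locus for Real line bundles over $(\T^2,\tau)$. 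I would keep this as a backup.

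\textbf{Step 2 (sign formula and product).} Under \cref{eq:fixed_circle_nonzero_F}, on $C_a$ we have $X_T=\sin\theta\,F_T(\cos a,\cos\theta)$. This has a fixed sign $s$ on $(0,\pi)$ and the opposite sign $-s$ on $(\pi,2\pi)$, by evenness of $F_T$ in $\theta$. At $\theta=0,\pi$ we have $n_T=\operatorname{sgn}Z_T(a,\cdot)\,e_z$; these are nonzero since the gap is positive and $X_T=Y_T=0$ there.
\begin{itemize}
\item If $Z_T(a,0)Z_T(a,\pi)>0$, the loop leaves a pole into the half-circle $\{\operatorname{sgn}X=s\}$, returns to the same pole, and then makes the mirror excursion. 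Its winding is $0$.
\item If $Z_T(a,0)Z_T(a,\pi)<0$, the loop runs pole to pole through one half-circle and returns through the other. Its winding is $\pm1$.
\end{itemize}
This gives \cref{eq:w1_sign_formula}. Substituting into \cref{eq:chern_sw_relation} and exponentiating $(-1)^{C_T}$ yields \cref{eq:chern_parity_product}. The identification of each factor with an ordinary parity characteristic function is \cref{eq:high_symmetry_abelian}, noting in particular that $Z_T(0,0)=1$.
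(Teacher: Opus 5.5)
Your proposal is correct, and for the first congruence it takes a genuinely different route from the paper.

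\textbf{The paper's route.} The paper never counts preimages. It uses $P_+(\tau(\alpha,\theta))=\overline{P_+(\alpha,\theta)}$ to make $L_T$ a Real line bundle in Atiyah's sense, and identifies $(\T^2,\tau)$ with a Real curve via $z=\theta+i\alpha$. It then quotes the Biswas--Huisman--Hurtubise result: the degree of a Real line bundle over a Real curve with nonempty real locus is congruent mod $2$ to the sum of the $w_1$'s of its real parts. This is essentially the ``backup'' you mention.

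\textbf{Your route.} You give an elementary equivariant degree count. For a regular value $v\in\Gamma$, preimages off $C_0\cup C_\pi$ come in $\tau$-pairs. The on-circle preimages count the windings of $n_T|_{C_a}$ into $\Gamma$ mod $2$. The half-angle eigenvector $(\cos\frac\varphi2,\sin\frac\varphi2)^{\mathsf T}$ then converts winding parity into $\langle w_1(\ell_a),[C_a]\rangle$.

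\textbf{What each approach buys.} Yours is self-contained and shows why the parity localizes on the fixed circles. It also makes explicit the identification $\langle w_1(\ell_a),[C_a]\rangle\equiv\operatorname{wind}(n_T|_{C_a})\pmod 2$. The paper's second half uses this only implicitly, when it says the real eigenline can cross the $z$ axis only at $\theta=0,\pi$. Your half-circle analysis in Step 2 is that same argument written out. The paper's route is shorter and applies to any Real line bundle, not only to eigenlines of two-band projectors, but it rests entirely on the cited theorem.

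\textbf{The transversality step.} You only sketch it, but it closes cleanly.
\begin{enumerate}
\item Choose $v\in\Gamma$ regular for the two restricted loops (Sard in one dimension).
\item At the finitely many on-circle preimages the normal derivative is $\partial_\alpha Y_T=-\cos a\,G_T(\cos a,\cos\theta)$. Adding $-\epsilon\sin\alpha$ to $Y_T$, i.e.\ replacing $G_T$ by $G_T+\epsilon$ for $\epsilon$ avoiding finitely many values, makes it nonzero. This preserves equivariance and leaves $X_T$, $Z_T$ and the loops unchanged.
\item The remaining preimages of $v$ form a compact set disjoint from the fixed circles, where $\tau$ acts freely. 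There, perturbations $\phi+S_\alpha\,\phi\circ\tau$ with $\operatorname{supp}\phi\cap\tau(\operatorname{supp}\phi)=\emptyset$ give ordinary transversality.
\item Since $\Delta_{\mathrm{pol}}(T)>0$, none of these small perturbations changes $\deg n_T$ or the fixed-circle windings.
\end{enumerate}
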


\begin{proof}
By \cref{lem:reflection},
\begin{equation}
    n_T(-\alpha,\theta)
    =
    \bigl(
        n_x(\alpha,\theta),
        -n_y(\alpha,\theta),
        n_z(\alpha,\theta)
    \bigr).
\end{equation}
Since complex conjugation leaves $\sigma_x$ and $\sigma_z$ unchanged and reverses the sign of $\sigma_y$,
\begin{equation}
    P_+\bigl(\tau(\alpha,\theta)\bigr)
    =
    \overline{P_+(\alpha,\theta)}.
    \label{eq:projector_real_symmetry}
\end{equation}
Thus complex conjugation defines an antilinear involution of $L_T$ covering $\tau$. In the terminology of Atiyah \cite{Atiyah1966}, $L_T$ is therefore a Real line bundle. On the fixed circles $C_0\sqcup C_\pi$, this antilinear symmetry reduces to ordinary complex conjugation within each fiber, whose fixed vectors form the real line bundles $\ell_0$ and $\ell_\pi$ introduced above.

To apply the corresponding result for Real curves, we equip the counting torus with its standard complex structure by writing \begin{equation} 
z=\theta+i\alpha, 
\qquad 
\mathbb T^2 \simeq \mathbb C/ \bigl(2\pi\mathbb Z+2\pi i\,\mathbb Z\bigr). 
\label{eq:counting_torus_complex} 
\end{equation} 
Under this identification, the reflection $\tau(\alpha,\theta)=(-\alpha,\theta)$ becomes complex conjugation, 
\begin{equation} 
z\longmapsto \bar z. 
\end{equation} 
Thus the counting torus is a compact Real curve, with real locus consisting precisely of the two fixed circles $C_0=\{\alpha=0\}$ and $C_\pi=\{\alpha=\pi\}$.

For a complex line bundle with this involutive structure over a compact Real curve with nonempty real locus, the parity of its degree is determined by the first Stiefel--Whitney classes of the real fixed-line bundles \cite{BiswasHuismanHurtubise2010}:
\begin{equation}
    \left\langle c_1(L_T),[\mathbb T^2]\right\rangle
    \equiv
    \left\langle w_1(\ell_0),[C_0]\right\rangle
    +
    \left\langle w_1(\ell_\pi),[C_\pi]\right\rangle
    \pmod2.
\end{equation}
Here
\[
    \left\langle c_1(L_T),[\mathbb T^2]\right\rangle=C_T
\]
is the first Chern number of $L_T$ \cite{MilnorStasheff1974,Husemoller1994}. Therefore \cref{eq:chern_sw_relation} follows.

It remains to determine the two Stiefel--Whitney numbers from the high-symmetry polarizations. On a fixed circle $\alpha=a$,
reflection symmetry gives $n_y=0$, and
\begin{equation}
    n_T(a,\theta)
    =
    \frac{
        \bigl(
            \sin\theta\,F_T(\cos a,\cos\theta),
            0,
            Z_T(a,\theta)
        \bigr)
    }{
        \|m_T(a,\theta)\|
    }.
    \label{eq:fixed_circle_texture}
\end{equation}
\cref{eq:fixed_circle_nonzero_F} implies that the
$x$ component can vanish on the fixed circle only at $\theta=0,\pi$, where the factor $\sin\theta$ vanishes. Thus the real eigenline can cross the $z$ axis only at these two high-symmetry points. Its Stiefel--Whitney number is therefore determined by whether the two corresponding $z$ polarizations have the same or opposite sign:
\begin{equation}
    (-1)^{
        \left\langle w_1(\ell_a),[C_a]\right\rangle
    }
    =
    \operatorname{sgn}
    \!\left[
        Z_T(a,0)Z_T(a,\pi)
    \right].
\end{equation}
This is equivalent to \cref{eq:w1_sign_formula}. Multiplying the contributions from $C_0$ and $C_\pi$ and using \cref{eq:chern_sw_relation} gives \cref{eq:chern_parity_product}.
\end{proof}

\begin{remark}
The congruence \cref{eq:chern_sw_relation} should not be confused with the familiar identity
\begin{equation}
w_2\!\left((L_T)_{\mathbb R}\right)=c_1(L_T)\pmod2,
\end{equation}
which concerns the underlying oriented real rank-two bundle \cite{MilnorStasheff1974}. Here reflection symmetry produces genuine real rank-one bundles on the fixed circles, and their $w_1$ classes encode the mod-$2$ Chern information.
\end{remark}

\begin{corollary}[Joint-parity criterion]
\label{cor:joint_parity}
Suppose the hypotheses of \cref{thm:chern_parity} hold and, in addition,
\begin{equation}
Z_T(0,\pi)>0,
\qquad
Z_T(\pi,0)>0.
\label{eq:other_parities_positive}
\end{equation}
Since $Z_T(0,0)=1$,
\begin{equation}
(-1)^{C_T}
= \operatorname{sgn}Z_T(\pi,\pi)
= \operatorname{sgn}\!\left[P_{\rm even}(T)-P_{\rm odd}(T)\right].
\label{eq:joint_parity_chern_mod2}
\end{equation}
Thus a sign reversal of the joint-current parity observable flips the parity of the Chern number.
\end{corollary}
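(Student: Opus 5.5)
The corollary follows from \cref{thm:chern_parity} by evaluating the sign product in \cref{eq:chern_parity_product}. Since the hypotheses of the theorem are assumed, including $\Delta_{\mathrm{pol}}(T)>0$ and the fixed-circle nonvanishing condition \cref{eq:fixed_circle_nonzero_F}, we may use
\[
(-1)^{C_T}=\operatorname{sgn}\bigl[Z_T(0,0)Z_T(0,\pi)Z_T(\pi,0)Z_T(\pi,\pi)\bigr].
\]
Everything then reduces to fixing the signs of the first three factors.

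The factors are handled in three steps. First, \cref{eq:high_symmetry_abelian} with $\eta_1=\eta_2=0$ gives $Z_T(0,0)=\E[1]=1$: at $\alpha=\theta=0$ every rotation is the identity, so $s_T=e_z$ deterministically. Second, the additional assumption \cref{eq:other_parities_positive} makes $Z_T(0,\pi)$ and $Z_T(\pi,0)$ strictly positive, so their signs equal $+1$. Third, the product collapses to $\operatorname{sgn}Z_T(\pi,\pi)$.

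The sign is well defined because each $Z_T$ at a high-symmetry point is nonzero. At those corners the transverse components vanish by \cref{lem:reflection}, so $\norm{m_T}=|Z_T|\ge\Delta_{\mathrm{pol}}(T)>0$. By \cref{eq:joint_parity}, $Z_T(\pi,\pi)=P(Q_1+Q_2\text{ even})-P(Q_1+Q_2\text{ odd})$, which is $P_{\rm even}(T)-P_{\rm odd}(T)$ in the corollary's notation. This gives \cref{eq:joint_parity_chern_mod2}.

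For the final sentence, consider a continuous family of times. Suppose that on each side of a sign reversal of $Z_T(\pi,\pi)$ the hypotheses hold, and that $Z_T(0,\pi)$ and $Z_T(\pi,0)$ stay positive. Then $(-1)^{C_T}$ changes sign, so the parity of $C_T$ flips. At the reversal point itself $Z_T(\pi,\pi)=0$ forces $m_T(\pi,\pi)=0$, so the polarization gap necessarily closes there, consistent with the gap-closing corollary preceding \cref{sec:symmetry}.

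No step here is a real obstacle; the only thing to check is that the signs in \cref{eq:chern_parity_product} are well defined, and the gap bound above settles that.
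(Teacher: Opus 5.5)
Your proposal is correct and takes the same route as the paper. You substitute $Z_T(0,0)=1$ and the assumed positivity of $Z_T(0,\pi)$ and $Z_T(\pi,0)$ into the product formula of \cref{thm:chern_parity}, then identify $Z_T(\pi,\pi)$ with $P_{\rm even}(T)-P_{\rm odd}(T)$ via \cref{eq:joint_parity}; your added check that the corner values are nonzero, because $\norm{m_T}=|Z_T|\ge\Delta_{\mathrm{pol}}(T)>0$ there, is correct and makes explicit a point the paper leaves implicit.
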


This gives a precise, but not identical, connection to the Stiefel--Whitney description of parity counting in Chernyak and Sinitsyn \cite{ChernyakSinitsyn2010}. There, a first Stiefel--Whitney class characterizes a real eigenline encountered at the Abelian counting point $\chi=\pi$ during a cyclic driving protocol. Here the primary object is a complex counter-spin eigenline over a two-dimensional non-Abelian counting torus. Reflection symmetry restricts it to real eigenlines on fixed circles, whose $w_1$ data constrain $c_1$ modulo two. The common appearance of $\chi=\pi$ reflects a characteristic-class compatibility mediated by symmetry rather than an identification of the two constructions.

Note that while \cref{eq:joint_parity_chern_mod2} can be used to determine its parity, the integer value of Chern number still needs to be obtained from the geometric degree of the full map $n_T:\mathbb T^2\to\mathbb S^2$ the full two-dimensional spin texture. This distinction remains valid even when additional zeros of the transverse components occur away from the high-symmetry points.

The following proposition supplies an independent short-time constraint.

\begin{proposition}
\label{prop:weak_activity}
Let $r_{\max}$ be the largest escape rate of the underlying Markov chain. If
\begin{equation}
r_{\max}T<\log2,
\label{eq:short_time_condition}
\end{equation}
then $m_{T,z}(\alpha,\theta)>0$ for every $(\alpha,\theta)$ and therefore
\begin{equation}
C_T=0.
\end{equation}
\end{proposition}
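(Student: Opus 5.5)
Split the trajectory ensemble according to whether any jump occurs in $[0,T]$. The spin only changes at jumps, so on the no-jump event the final spin is exactly $s_T=e_z$. That event gives a strictly positive contribution to $m_{T,z}$. Every other trajectory has $s_T\in\Sph^2$ and hence $e_z\cdot s_T\ge -1$. Writing $N_T$ for the number of jumps, we get
\[
m_{T,z}(\alpha,\theta)=\E[e_z\cdot s_T]\ \ge\ \Prob(N_T=0)-\Prob(N_T\ge1)=2\Prob(N_T=0)-1 .
\]
This bound is uniform in $(\alpha,\theta)$ because it never uses the form of the rotations.

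Next, bound the no-jump probability from below. Starting from state $i$, the holding time is exponential with rate $r_i\le r_{\max}$. Hence $\Prob(N_T=0\mid X_0=i)=e^{-r_iT}\ge e^{-r_{\max}T}$, and averaging over any initial distribution $p(0)$ gives $\Prob(N_T=0)\ge e^{-r_{\max}T}$. Under the condition $r_{\max}T<\log2$ we have $e^{-r_{\max}T}>1/2$, so
\[
m_{T,z}(\alpha,\theta)\ge 2e^{-r_{\max}T}-1>0 \quad\text{for all }(\alpha,\theta).
\]
Since the bound is uniform and the torus is compact, this also gives $\norm{m_T}\ge m_{T,z}\ge 2e^{-r_{\max}T}-1>0$. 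Therefore $\Delta_{\mathrm{pol}}(T)>0$, and $n_T$ and $C_T$ are well defined by the degree formula proposition.

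Finally, conclude $C_T=0$ from the fact that $n_T$ takes values in the open upper hemisphere $\{n_z>0\}$. That hemisphere is contractible, so $n_T$ is homotopic to a constant map and $\deg(n_T)=0$. Equivalently, $n_T$ misses the south pole, and a map that is not surjective onto $\Sph^2$ has degree zero. A concrete homotopy is
\[
n^{(s)}=\frac{(1-s)\,n_T+s\,e_z}{\norm{(1-s)\,n_T+s\,e_z}},\qquad s\in[0,1],
\]
which never vanishes because the $z$-component of the numerator stays positive. By the degree formula, $C_T=\deg(n_T)=0$.

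The only step needing care is the uniform lower bound on $\Prob(N_T=0)$. That requires bounding each initial state's escape rate by $r_{\max}$ rather than using the stationary average; the rest is routine. As a consistency check, the parity criterion of \cref{thm:chern_parity} is independent of this argument but agrees with it: the same estimate makes all four $Z_T(a,b)$ positive, which gives even parity.
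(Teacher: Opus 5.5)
Your proof is correct and is essentially the paper's argument: isolate the no-jump event, where the spin stays at $e_z$; bound every other contribution to $m_{T,z}$ below by $-1$; use $\Prob(N_T=0)\ge e^{-r_{\max}T}$ to get $m_{T,z}\ge 2e^{-r_{\max}T}-1>0$ uniformly in the angles; and conclude that $n_T$ maps into the open northern hemisphere and is null-homotopic. You also make two points explicit that the paper leaves implicit, namely that $\norm{m_T}\ge m_{T,z}$ opens the polarization gap so $C_T$ is defined, and a concrete contracting homotopy. In your closing aside, note that the product formula of \cref{thm:chern_parity} formally also needs the $F_T$ hypothesis, although $n_z>0$ on the fixed circles already forces both $w_1$ numbers to vanish directly.
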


\begin{proof}
For any initial state, the probability of no jump during $[0,T]$ is at least $p_0=e^{-r_{\max}T}$. No-jump trajectories leave the spin at $e_z$. Every other trajectory has final $z$ component at least $-1$, so uniformly in the counting angles
\[
m_{T,z}\ge p_0-(1-p_0)=2e^{-r_{\max}T}-1.
\]
Under \cref{eq:short_time_condition} this is positive. Hence the image of $n_T$ lies in the northern hemisphere and is null-homotopic.
\end{proof}

\section{Numerical results}
\label{sec:numerics}
We use the Markov network described in \cref{sec:model} and set the parameters as follows.
\begin{equation}
k_{1,+}=1.6,
\qquad k_{1,-}=0.55,
\qquad k_{2,+}=1.35,
\qquad k_{2,-}=0.65.
\label{eq:parameters}
\end{equation}
By \cref{eq:cycle_currents},
\begin{equation}
J_1=0.21,
\qquad
J_2=0.14.
\label{eq:thermodynamic_values}
\end{equation}

At $(\alpha,\theta,T)=(1.2,1.0,4)$, $50{,}000$ Gillespie trajectories \cite{Gillespie1977} give the mean spin shown in \cref{tab:validation}. The Euclidean discrepancy from the exact matrix-exponential result is $5.10\times10^{-3}$, consistent with the componentwise Monte Carlo errors.

\begin{table}[htbp]
\centering
\caption{Exact and Monte Carlo mean spin at $(\alpha,\theta,T)=(1.2,1.0,4)$.}
\label{tab:validation}
\begin{tabular}{lrrr}
\toprule
Component & Exact & Monte Carlo & Standard error\\
\midrule
$s_x$ & $0.294892$ & $0.295846$ & $0.002051$\\
$s_y$ & $-0.362790$ & $-0.362462$ & $0.002133$\\
$s_z$ & $0.146626$ & $0.151630$ & $0.002531$\\
\bottomrule
\end{tabular}
\end{table}

The simulated mean cycle counts are $0.83598$ and $0.56136$, compared with the stationary predictions $J_1T=0.84$ and $J_2T=0.56$. The explicit order test gives
\[
\norm{s_{AB}-s_{BA}}=0.686636,
\]
while the Frobenius norm of the rotation commutator is $1.474157$. Thus the spin counter records substantial path-order information without changing the ordinary currents.

At $T=4$ the exact mean-spin norm has a numerical global minimum
\begin{equation}
\min_{\T^2}\norm{m_T}
=4.1598623426\times10^{-3}
\label{eq:T4_gap}
\end{equation}
at $(\alpha,\theta)=(\pi,\pi)$ to numerical precision. The normalized field is therefore well defined numerically throughout the counting torus. \Cref{fig:texture} shows its three components and a sparse coordinate-net image on $\Sph^2$.

\begin{figure}[htbp]
\centering
\begin{subfigure}[t]{0.5\textwidth}
\centering
\includegraphics[width=\linewidth]{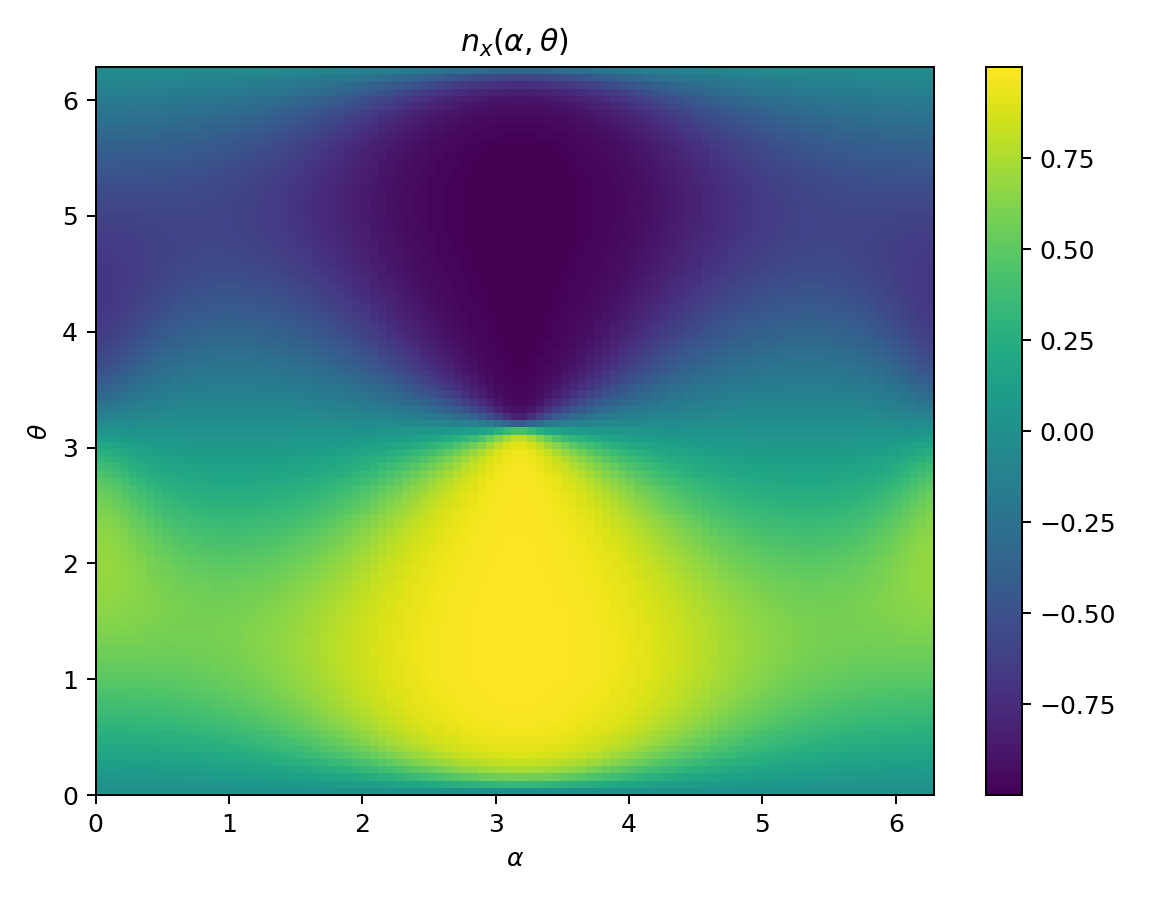}
\caption{$n_x(\alpha,\theta)$.}
\end{subfigure}\hfill
\begin{subfigure}[t]{0.5\textwidth}
\centering
\includegraphics[width=\linewidth]{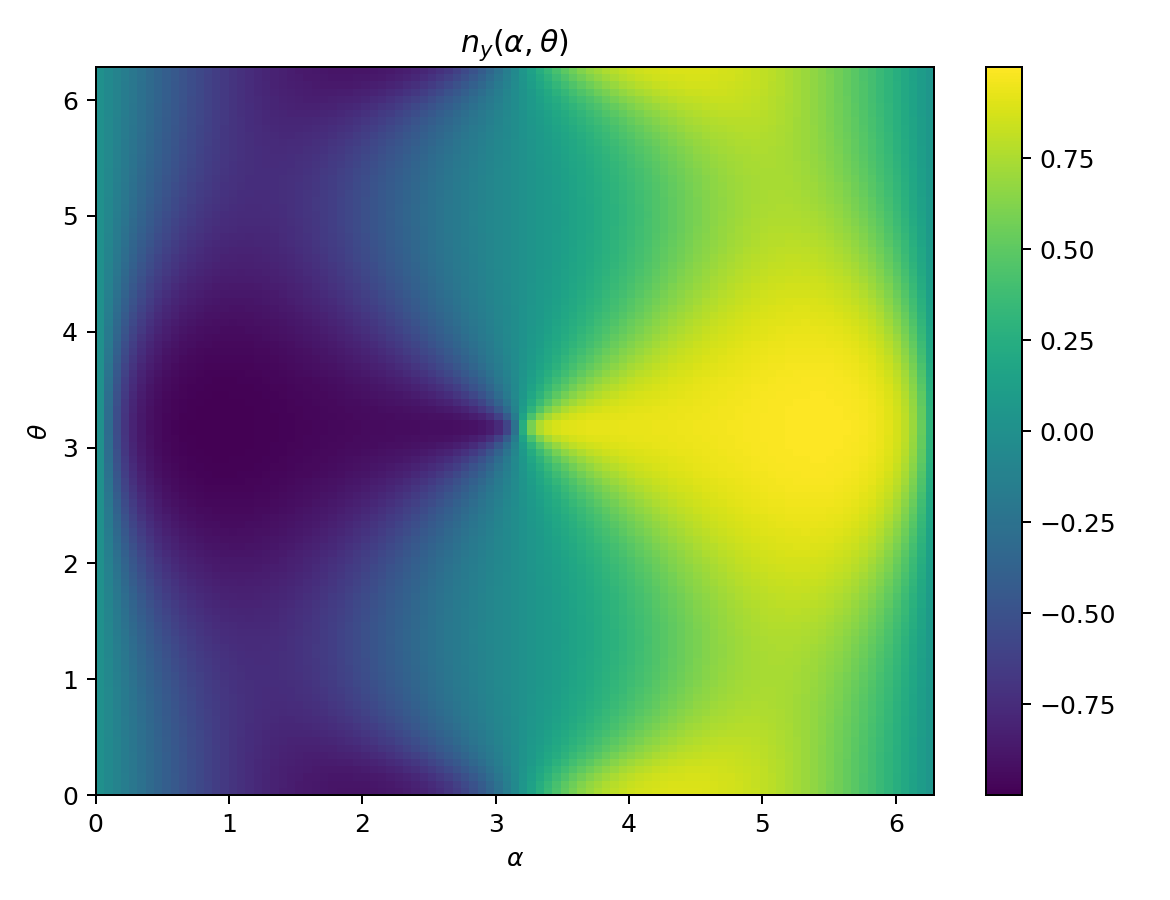}
\caption{$n_y(\alpha,\theta)$.}
\end{subfigure}

\medskip
\begin{subfigure}[t]{0.5\textwidth}
\centering
\includegraphics[width=\linewidth]{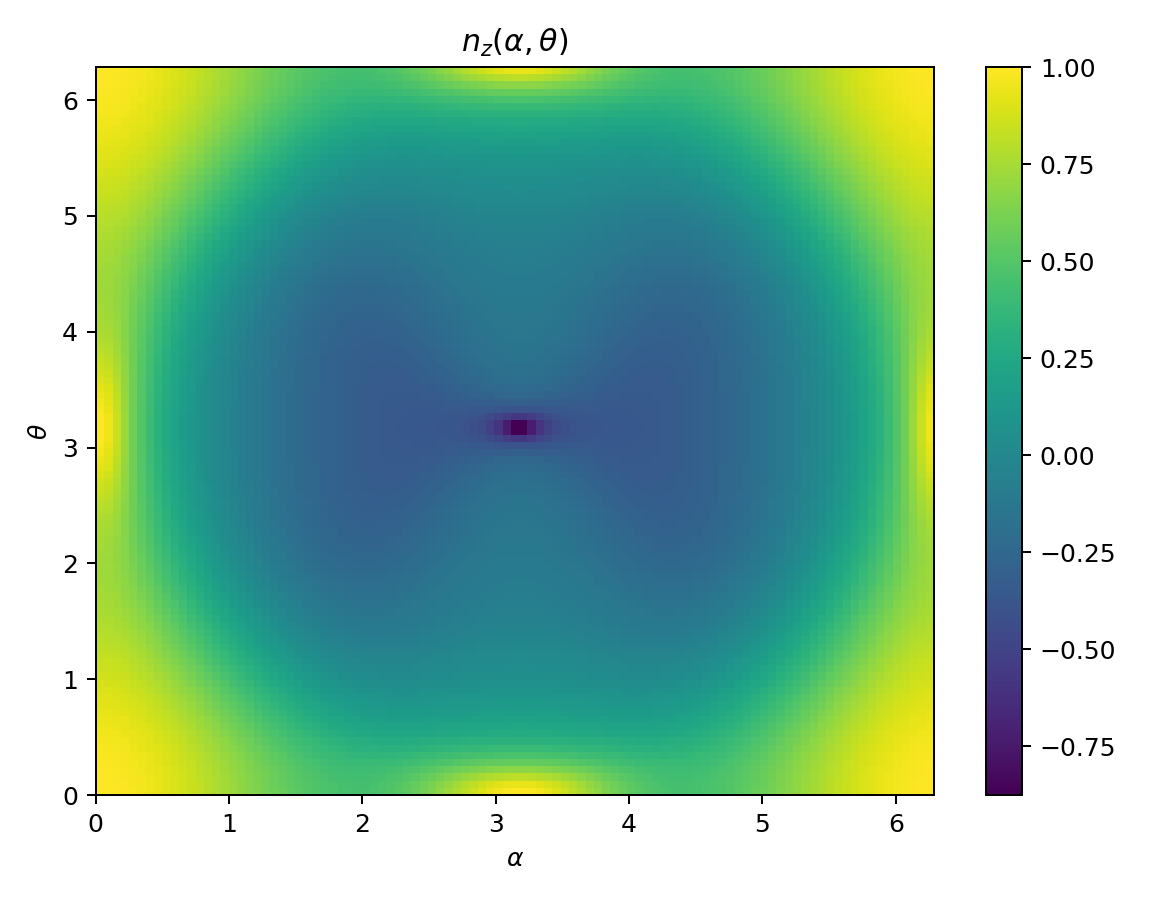}
\caption{$n_z(\alpha,\theta)$.}
\end{subfigure}\hfill
\begin{subfigure}[t]{0.5\textwidth}
\centering
\includegraphics[width=\linewidth]{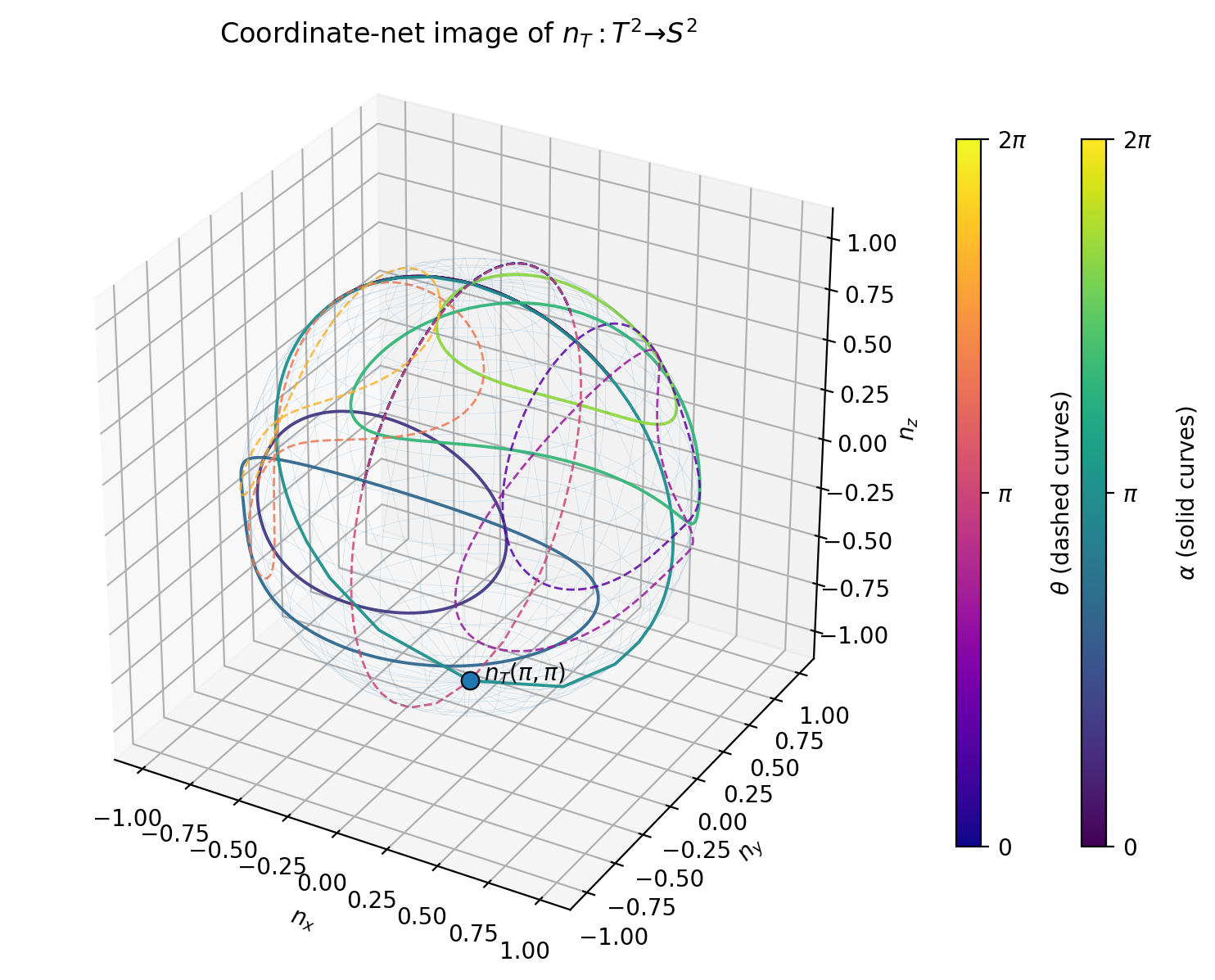}
\caption{Coordinate-net image on $\Sph^2$.}
\end{subfigure}
\caption{Normalized mean-spin texture at $T=4$. The component plots display the parameter-space structure, while the coordinate-net plot shows how the counting torus folds over the target sphere. The strong deformation near $(\pi,\pi)$ is associated with the small polarization norm there.}
\label{fig:texture}
\end{figure}

\subsection{Calculation of Chern number}
For the rank-one projector $P_+(\alpha,\theta)$ defined in \cref{eq:normalized_texture}, the Berry curvature is one-half the pullback of the standard area form on $\mathbb S^2$. Consequently, the curvature flux through a small triangle in the counting torus can be evaluated geometrically as one-half the oriented solid angle subtended by the image triangle on $\mathbb S^2$. Therefore, one can compute the Chern number using the lattice spherical-triangle formula \cite{BergLuscher1981}. For three unit vectors $a,b,c\in\mathbb S^2$, the oriented solid angle is
\begin{equation}
\Omega(a,b,c)
=
2\operatorname{atan2}\!\left(
a\cdot(b\times c),
1+a\cdot b+b\cdot c+c\cdot a
\right).
\label{eq:spherical_triangle}
\end{equation}
Dividing each periodic grid cell into two oriented triangles gives
\begin{equation}
C^{\mathrm{geom}}
=
\frac{1}{4\pi}
\sum_{\triangle}\Omega_\triangle.
\label{eq:lattice_chern}
\end{equation}
Thus the spherical-triangle construction is a discrete geometric
evaluation of the Chern curvature defined by $P_+$. At $T=4$, the result is $-1$ to machine precision at every tested resolution (\cref{tab:T4_convergence}), whereas the derivative-based integral converges more slowly because the normalized field varies sharply near $(\pi,\pi)$.

\begin{table}[htbp]
\centering
\caption{Grid refinement at $T=4$. The spherical-triangle degree is stable at $-1$; the central-difference integral converges toward the same integer.}
\label{tab:T4_convergence}
\begin{tabular}{rrrr}
\toprule
$N$ & $\min_{\rm grid}\norm{m_T}$ & $C_N^{\rm FD}$ & $C_N^{\rm geom}$\\
\midrule
21  & $1.21939\times10^{-2}$ & $-0.654185$ & $-1.000000000000$\\
31  & $8.76157\times10^{-3}$ & $-0.761263$ & $-1.000000000000$\\
41  & $7.15021\times10^{-3}$ & $-0.830225$ & $-1.000000000000$\\
61  & $5.70324\times10^{-3}$ & $-0.905900$ & $-1.000000000000$\\
81  & $5.09179\times10^{-3}$ & $-0.941712$ & $-1.000000000000$\\
101 & $4.77979\times10^{-3}$ & $-0.960653$ & $-1.000000000000$\\
\bottomrule
\end{tabular}
\end{table}
\FloatBarrier

\subsection{Repeated finite-time Chern transitions and their parity signature}
\label{sec:reentrant}

We next varied the observation time while keeping both the stochastic rates and the spin-counting rule fixed. Over the numerically resolved interval $0.02\le T\le14$, we evaluated the four high-symmetry polarizations
\[
Z_T(0,0),\qquad
Z_T(\pi,0),\qquad
Z_T(0,\pi),\qquad
Z_T(\pi,\pi),
\]
searched for the global minimum of $\|m_T\|$ over $\mathbb T^2$, and computed the geometric degree at representative times on an even $80\times80$ periodic grid.

The joint-parity polarization $Z_T(\pi,\pi)$ has four zeros,
\begin{align}
T_{c1}&=2.07416, &
T_{c2}&=5.40516, \nonumber\\
T_{c3}&=9.45918, &
T_{c4}&=12.1005.
\label{eq:critical_times}
\end{align}
In particular, the derivatives at the third and fourth roots are $Z'_T(\pi,\pi) = -3.06\times 10^{-6}$ and $1.08\times 10^{-7}$, respectively, confirming that the late-time crossings are not numerical zeros caused by double-precision roundoff.
At the refined roots,
\begin{align}
\|m_{T_{c1}}(\pi,\pi)\|
    &\approx 5.51\times10^{-17}, &
\|m_{T_{c2}}(\pi,\pi)\|
    &\approx 4.27\times10^{-18}, \nonumber\\
\|m_{T_{c3}}(\pi,\pi)\|
    &\approx 9.71\times10^{-19}, &
\|m_{T_{c4}}(\pi,\pi)\|
    &\approx 1.68\times10^{-19}.
\label{eq:critical_norms}
\end{align}
Refined minimization places all four polarization-gap closings at $(\pi,\pi)$. The other three high-symmetry polarizations remain positive over the resolved interval, whereas $Z_T(\pi,\pi)$ follows the sign sequence
\begin{equation}
    +,\;-,\;+,\;-,\;+.
\label{eq:Zpipi_sign_sequence}
\end{equation}

At this symmetry point the spin observable is exactly the ordinary joint-current parity characteristic function,
\begin{equation}
    Z_T(\pi,\pi)
    =
    \mathbb E\!\left[(-1)^{Q_1(T)+Q_2(T)}\right]
    =P_{\rm even}(T)-P_{\rm odd}(T).
\label{eq:Zpipi_parity_numeric}
\end{equation}
Hence each numerically resolved gap closing satisfies
\begin{equation}
    P_{\rm even}(T_{ci})=P_{\rm odd}(T_{ci}),
    \qquad i=1,\ldots,4.
\label{eq:critical_parity_balance}
\end{equation}
Over the same resolved interval, the factor $F_T$ remains nonzero on both reflection-fixed circles, while $Z_T(0,\pi)$ and $Z_T(\pi,0)$ remain positive. The hypotheses of \cref{cor:joint_parity} are therefore satisfied away from the four gap closings, and the theorem gives
\begin{equation}
(-1)^{C_T}=\operatorname{sgn}Z_T(\pi,\pi).
\label{eq:numeric_chern_parity}
\end{equation}
Thus the sign sequence in \cref{eq:Zpipi_sign_sequence} predicts the mod-$2$ sequence even--odd--even--odd--even. The integer Chern number and its sign are determined independently from the full two-dimensional texture, which, by the spherical-degree calculation below, is $0,-1,0,-1,0$. This separation is consistent with our discussion above: the parity statistic diagnoses the parity of the Chern number, whereas the non-Abelian spin texture carries the full integer invariant. Representative degree calculations on both sides of the four critical times are listed in \cref{tab:phase_degrees}, and \cref{fig:phase_scan} demonstrates a summary of the results. 

\begin{figure}[htbp]
\centering
\includegraphics[width=0.8\textwidth]{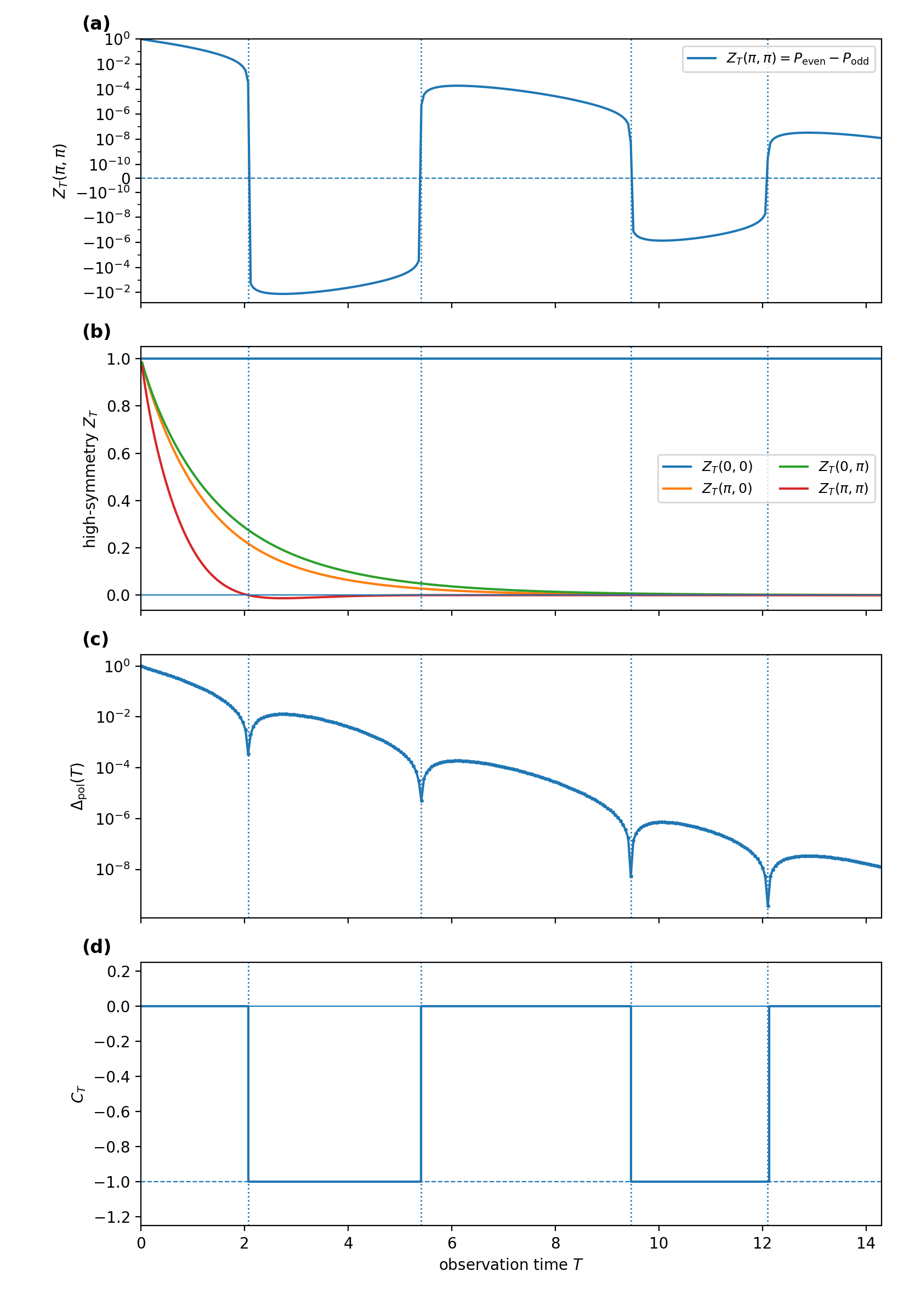}
\caption{
Finite-time topological phase scan and parity signature.
(a) Joint-current parity observable $Z_T(\pi,\pi)=P_{\rm even}-P_{\rm odd}$ on a symmetric logarithmic scale; dotted lines mark its four zero crossings.
(b) The four high-symmetry polarizations. The first three remain positive, while $Z_T(\pi,\pi)$ changes sign four times.
(c) Polarization gap $\Delta_{\rm pol}(T)$, which closes at the same four critical times.
(d) Chern number obtained independently from spherical triangulation. At each critical time the eigenline bundle is undefined because the mean polarization vanishes. The one-to-one correspondence between the Chern transitions and $P_{\rm even}=P_{\rm odd}$ is therefore a symmetry-resolved parity signature of the Chern-number transitions of the non-Abelian counter.
}
\label{fig:phase_scan}
\end{figure}

\begin{table}[htbp]
\centering
\caption{Representative geometric degree calculations across the four finite-time topological transitions.}
\label{tab:phase_degrees}
\begin{tabular}{rrr}
\toprule
$T$ & $\min_{\rm grid}\|m_T\|$ & $C_T^{\rm geom}$\\
\midrule
0.02000  & $9.676\times10^{-1}$ & $0$\\
2.04920  & $1.296\times10^{-3}$ & $0$\\
2.09913  & $1.212\times10^{-3}$ & $-1$\\
4.01467  & $4.068\times10^{-3}$ & $-1$\\
5.38019  & $1.763\times10^{-5}$ & $-1$\\
5.43012  & $1.657\times10^{-5}$ & $0$\\
9.43421  & $7.930\times10^{-8}$ & $0$\\
9.48414  & $7.374\times10^{-8}$ & $-1$\\
10.00667 & $7.212\times10^{-7}$ & $-1$\\
12.00400 & $1.158\times10^{-8}$ & $-1$\\
12.50333 & $2.776\times10^{-8}$ & $0$\\
14.00133 & $1.724\times10^{-8}$ & $0$\\
\bottomrule
\end{tabular}
\end{table}

Over the observed interval,
\begin{equation}
C_T=
\begin{cases}
0,  & 0.02\le T<T_{c1},\\
-1, & T_{c1}<T<T_{c2},\\
0,  & T_{c2}<T<T_{c3},\\
-1, & T_{c3}<T<T_{c4},\\
0,  & T_{c4}<T\le14,
\end{cases}
\label{eq:reentrant_phase}
\end{equation}
with $C_T$ undefined at $T=T_{ci}$. The short-time trivial regime is consistent with \cref{prop:weak_activity}. The nontrivial sector then appears, disappears, reappears, and disappears again without any change in the physical generator or the counter geometry. The transitions are generated solely by the evolving finite-time distribution of ordered stochastic trajectories.

The later-time gaps are very small, so the corresponding first-moment topology becomes increasingly sensitive to perturbations and numerical resolution even though the geometric degree remains integer-valued away from the refined zeros. Physically, this loss of polarization corresponds to repeated random rotations progressively dephasing the first moment of the counter.

\subsection{Comparison with conventional kinetic timescales}
\label{sec:timescales}

The repeated finite-time phase diagram raises a natural question: are the four critical times simply familiar kinetic timescales of the underlying five-state Markov process? We first compare them with the relaxation spectrum of the untwisted physical generator $L$. Its four nonzero eigenvalues are
\begin{equation}
\lambda_j=
-1.1305971458,\;
-2.9725989130,\;
-3.9954444198,\;
-4.3513595214,
\label{eq:markov_spectrum}
\end{equation}
so the associated relaxation times
$\tau_j=1/|\operatorname{Re}\lambda_j|$ are
\begin{equation}
\tau_j=
0.8844883465,\;
0.3364059630,\;
0.2502850484,\;
0.2298132331.
\label{eq:relaxation_times}
\end{equation}
Writing $\tau_{\rm rel}=0.8844883465$ for the slowest relaxation time, the four critical times occur at
\begin{equation}
\frac{T_{c1}}{\tau_{\rm rel}}=2.3450,\qquad
\frac{T_{c2}}{\tau_{\rm rel}}=6.1111,\qquad
\frac{T_{c3}}{\tau_{\rm rel}}=10.6945,\qquad
\frac{T_{c4}}{\tau_{\rm rel}}=13.6808.
\label{eq:critical_relaxation_ratios}
\end{equation}
The residual amplitudes of the slowest ordinary relaxation mode at these times are
\begin{equation}
e^{\lambda_1T_{ci}}
=
9.58\times10^{-2},\;
2.22\times10^{-3},\;
2.27\times10^{-5},\;
1.14\times10^{-6},
\qquad i=1,\ldots,4.
\label{eq:critical_mode_amplitudes}
\end{equation}
Thus the later reentrant transitions occur when ordinary
physical-state relaxation is already essentially complete. In
particular, the repeated changes of $C_T$ cannot be identified with successive relaxation modes of the Markov generator.

A second comparison uses stochastic activity. The stationary mean escape rate is
\begin{equation}
\bar r=\sum_i p_i^{\rm ss}r_i=2.49,
\end{equation}
so the four critical times correspond to mean accumulated jump counts
\begin{equation}
\bar rT_{c1}=5.165,\qquad
\bar rT_{c2}=13.459,\qquad
\bar rT_{c3}=23.553,\qquad
\bar rT_{c4}=30.130.
\label{eq:critical_activities}
\end{equation}
These values are not equally spaced, nor are the topological sectors selected by a fixed accumulated number of jumps.

We finally ask whether the first onset of the nontrivial value $C_T=-1$ simply coincides with the time at which complete traversals of the fundamental cycles become typical. Let
\begin{equation}
\mathcal T_{A^+}
=
\inf\{t:\;0\to1\to2\to0
\text{ has occurred consecutively in the jump sequence}\},
\label{eq:first_A_cycle}
\end{equation}
and define $\mathcal T_{A^-}$, $\mathcal T_{B^+}$, and $\mathcal T_{B^-}$ analogously. Unlike the chord counts in
\cref{eq:cycle_counts}, which record individual distinguished-edge crossings, these variables record the first occurrence of an entire three-edge directed word. Their distributions can be computed exactly by augmenting the physical chain with a finite pattern-progress variable and treating word completion as an absorbing event. Details are given in \cref{app:first_passage}.

\begin{table}[htbp]
\centering
\caption{
Exact first-passage statistics for complete directed cycle words, starting from the stationary physical-state ensemble with zero word progress. The probabilities at $T_{c1}$ and $T_{c2}$ bracket the first nontrivial Chern window.}
\label{tab:first_passage}
\begin{tabular}{lrrrr}
\toprule
Word & $\E[\mathcal T]$ & median
& $\Prob(\mathcal T\le T_{c1})$
& $\Prob(\mathcal T\le T_{c2})$\\
\midrule
$A^+:0\to1\to2\to0$ & $6.372$ & $4.724$ & $0.193$ & $0.558$\\
$A^-:0\to2\to1\to0$ & $139.647$ & $97.083$ & $0.008$ & $0.032$\\
$B^+:0\to3\to4\to0$ & $8.858$ & $6.471$ & $0.131$ & $0.428$\\
$B^-:0\to4\to3\to0$ & $73.556$ & $51.295$ & $0.015$ & $0.059$\\
\bottomrule
\end{tabular}
\end{table}
\FloatBarrier

At the first Chern transition, only about $19\%$ of trajectories have completed the full forward $A$ word and about $13\%$ the full forward $B$ word. In particular,
\begin{equation}
T_{c1}<\operatorname{med}(\mathcal T_{A^+}),
\qquad
T_{c1}<\operatorname{med}(\mathcal T_{B^+}).
\label{eq:Tc1_before_typical_cycle}
\end{equation}
The first appearance of the $C_T=-1$ sector therefore precedes the time at which a complete forward traversal of either fundamental cycle is typical. Conversely, the later transitions at $T_{c3}$ and $T_{c4}$ occur well after the forward-cycle median times and after ordinary state-space relaxation has effectively decayed. The repeated topological changes therefore cannot be reduced either to relaxation of the physical Markov state or to a single threshold for cycle completion.

Taken together, these comparisons support a more specific
interpretation of the finite-time Chern transitions. They reflect the changing ensemble of ordered stochastic histories accumulated by the non-Abelian counter. The ordinary physical dynamics, mean activity, and complete-cycle first-passage statistics provide natural kinetic reference scales, but none individually reproduces the sequence of four critical times.  In the present reflection-symmetric model, the same collective rearrangement has the simple parity signature $P_{\rm even}=P_{\rm odd}$ at each topological transition.

\section{Discussion}
\label{sec:discussion}

The finite-time transitions show that observation time can change the Chern number of the counter state even when the underlying stochastic process and the counting rule are fixed.  What evolves with $T$ is the distribution of ordered trajectory histories sampled by the noncommuting counter. At short times, no-jump and low-jump histories keep the mean spin close to its initial direction. As longer trajectories accumulate, different orderings of the two cycle rotations contribute with changing weights and reorganize the two-dimensional spin texture. The resulting texture can lose and recover nonzero degree even though the Markov generator, stationary state, and rotation rule remain unchanged.

The full integer $C_T$ is a property of this order-sensitive two-parameter spin texture. Reflection symmetry, however, makes its parity accessible through one-dimensional data on the two reflection-fixed circles. As established in \cref{thm:chern_parity}, the parity of the first Chern number is determined by the first Stiefel--Whitney classes of the corresponding real eigenline bundles. When the $x$-component of the polarization has no additional zeros along these circles away from the high-symmetry points, those Stiefel--Whitney numbers can in turn be read from the four high-symmetry polarizations. For the parameters studied here, three of these quantities remain positive, leaving
\[
    Z_T(\pi,\pi)=P_{\rm even}-P_{\rm odd}
\]
as the observable that distinguishes even from odd values of $C_T$. The integer value and its sign are nevertheless determined only by the full map $n_T:\mathbb T^2\to\mathbb S^2$.

This distinction clarifies the role of temporal ordering in the construction. The parity observable itself is Abelian. At $(\pi,\pi)$ the non-Abelian counter reduces exactly to the ordinary joint-current parity characteristic function. Away from the high-symmetry points, by contrast, the rotations associated with the two cycles do not commute, and trajectories with the same $(Q_1,Q_2)$ can produce different counter states. Thus the high-symmetry parity statistic does not reconstruct the non-Abelian counting distribution. Rather, reflection symmetry allows the modulo-two part of a Chern number defined by the order-sensitive two-dimensional family to be detected by an ordinary current statistic.

The parity crossings also admit an interpretation within conventional full counting statistics. Ren and Sinitsyn
\cite{RenSinitsyn2013} outlined the analogy between a stochastic generating function and an equilibrium partition function. For a counted current $Q$, the characteristic function $\mathcal Z_T(\chi)=\langle e^{i\chi Q}\rangle$ plays the role of a trajectory-space partition function, while $\log\mathcal Z_T$ is the corresponding cumulant-generating function and is analogous to a free energy. Zeros of the analytically continued generating function are therefore dynamical analogues of Lee--Yang zeros. In the present
two-current problem,
\[
    \mathcal Z_T(\pi,\pi)
    =
    Z_T(\pi,\pi)
    =
    P_{\rm even}(T)-P_{\rm odd}(T),
\]
so each of the four resolved parity balances is a time at which the Lee--Yang zero set of the two-current generating function passes through the joint-parity point $(\chi_1,\chi_2)=(\pi,\pi)$.

Reflection symmetry ties these particular zeros directly to the
counter-spin degeneracy. At the joint-parity point,
\[
    m_T(\pi,\pi)
    =
    \bigl(0,0,Z_T(\pi,\pi)\bigr),
\]
and hence
\[
    \mathcal Z_T(\pi,\pi)=0
    \quad\Longleftrightarrow\quad
    m_T(\pi,\pi)=0.
\]
At such a time the two eigenvalues of the averaged spin state become degenerate and the dominant eigenline ceases to be defined. For all four transitions resolved here, these symmetry-point zeros coincide with changes of Chern parity. The Lee--Yang language therefore gives an ordinary full-counting-statistics description of the same high-symmetry singularities that appear as gap closings of the non-Abelian counter.

The comparisons in \cref{sec:timescales} show that these
critical times cannot be identified with any single conventional
kinetic scale. The first transition occurs before completion of
either forward fundamental-cycle word is typical, whereas the later transitions occur after ordinary state-space relaxation has essentially decayed. The accumulated activities at the four
transitions are also not equally spaced. Relaxation, activity, and cycle-completion statistics therefore provide useful reference scales but do not by themselves reproduce the observed sequence of critical times. The transitions instead reflect collective changes in the finite-time ensemble of ordered histories sampled by the auxiliary counter.

A possible physical realization of the order-sensitive counter is an orientational degree of freedom driven by a stochastic environment. For example, different transitions in an internal chemical or conformational network could induce finite rotations of a molecular motor about different axes. Because rotations about distinct axes do not commute, the resulting orientation would depend on the temporal sequence of stochastic transitions rather than only on their integrated currents. The spin counter studied here can be viewed as a minimal passive version of such a system, and the finite-time Chern number then characterizes the global dependence of this averaged orientation on the two rotation parameters.

Several limitations point to natural extensions. The present
construction uses only the first moment of the spin-counting distribution, which becomes increasingly weakly polarized at long times. Higher representation moments, tensor observables, or the full probability measure on the group orbit may retain ordering information after the mean spin has strongly dephased. The counter is also passive; incorporating detector back-action would require a coupled stochastic model. Finally, the repeated reentrance has been demonstrated here for one minimal network.  Determining which graph structures, rate asymmetries, and non-Abelian representations support finite-time Chern-number transitions remains an open problem. The very small late-time gaps in \cref{tab:phase_degrees} also motivate a quantitative analysis of robustness to perturbations and finite sampling.

\section{Conclusion}
We introduced an order-sensitive spin counter for stochastic network trajectories in which crossings of different cycles generate noncommuting rotations. Its finite-time mean polarization defines, whenever the polarization gap is open, a complex eigenline bundle over a two-angle counting torus and hence a first Chern number. Reflection symmetry relates the parity of this Chern number to first Stiefel--Whitney invariants on the reflection-fixed circles and, under the stated condition on the fixed-circle polarization, to ordinary
current-parity characteristic functions. In the five-state
figure-eight model, varying observation time alone produces four gap closings and the reentrant sequence
\[
    C_T=0\longrightarrow-1\longrightarrow0
    \longrightarrow-1\longrightarrow0.
\]
Each resolved transition occurs at $(\pi,\pi)$, where $P_{\rm even}=P_{\rm odd}$, while the full integer values are obtained independently from the degree of the two-dimensional spin texture.

The example shows how a fixed stationary stochastic process can acquire distinct finite-time Chern numbers when probed by a counter that retains temporal ordering information discarded by ordinary current statistics.  Reflection symmetry makes the parity of this integer accessible through an Abelian observable, but the full invariant remains a property of the non-Abelian counting family. Extending the construction to larger networks, other group representations, and higher moments should clarify how broadly characteristic classes can organize order-sensitive information in stochastic trajectory ensembles.

\section{Acknowledgments}
The author thanks Nikolai A. Sinitsyn for proposing the non-Abelian spin-counting example and for insightful discussions that helped motivate this work.

\appendix
\section{Exact first-passage distribution for a directed cycle word}
\label{app:first_passage}

For completeness, we summarize the exact calculation underlying
\cref{tab:first_passage}. Let
\[
w=(e_1,e_2,e_3)
\]
be one of the directed three-edge words $A^\pm$ or $B^\pm$. To determine the first time at which $w$ occurs, we augment the physical Markov state $X_t\in\{0,\ldots,4\}$ with a progress variable
\[
r_t\in\{0,1,2\}.
\]
The value of $r_t$ records how much of the beginning of $w$ has most recently been matched. Specifically, $r_t=0$ means that no nonempty prefix of $w$ is currently matched, $r_t=1$ means that the most recent jump matches $e_1$, and $r_t=2$ means that the two most recent jumps match $e_1,e_2$. After each physical jump, $r_t$ is updated to the length of the longest suffix of the observed jump sequence that is also a prefix of $w$.

If $r_t=2$ and the next jump is $e_3$, then the complete three-edge word $w$ has occurred. We stop the first-passage process at this event and represent completion by an absorbing state. Before completion, the augmented process therefore has the $5\times3=15$ transient states
\[
(X,r)\in\{0,\ldots,4\}\times\{0,1,2\}.
\]

Let $p_w(t)\in\mathbb R^{15}$ denote the probability vector over these transient augmented states, with components
\begin{equation}
[p_w(t)]_{(i,r)} = \Pr\!\left(X_t=i,\,r_t=r,\,\mathcal T_w>t\right),
\label{eq:augmented_first_passage_prob}
\end{equation}
where $\mathcal T_w$ is the first time at which the complete word $w$ occurs. Thus $p_w(t)$ contains only probability mass from trajectories that have not yet completed $w$.

We initialize the physical process in its stationary distribution and assume that no part of the target word has already been matched. Hence the initial augmented probability vector is
\begin{equation}
[p_w(0)]_{(i,r)}
=
\begin{cases}
p_i^{\rm ss}, & r=0,\\
0, & r=1,2.
\end{cases}
\label{eq:first_passage_initial}
\end{equation}

Let $A_w$ be the $15\times15$ transient subgenerator governing
transitions among the augmented states for which $w$ has not yet been completed. Jumps that remain within the transient state space appear in $A_w$. Jumps that complete the word are instead collected in an absorption-rate vector $q_w\in\mathbb R^{15}$, where $(q_w)_{(i,r)}$ is the total rate at which the target word is completed from the augmented state $(i,r)$. In the column-vector convention used throughout the paper, the transient probability vector evolves according to
\begin{equation}
p_w(t)=e^{tA_w}p_w(0).
\label{eq:first_passage_transient_evolution}
\end{equation}

Since $p_w(t)$ contains precisely the probability mass of trajectories that have not yet completed the word, summing its components gives the survival probability
\begin{equation}
S_w(t)
=
\mathbf 1^\mathsf T e^{tA_w}p_w(0)
=
\Pr(\mathcal T_w>t).
\label{eq:first_passage_phase_type}
\end{equation}
The probability flux from the transient states into the absorbing completion state gives the first-passage density,
\begin{equation}
f_w(t)
=
q_w^\mathsf T e^{tA_w}p_w(0)
=
-\frac{d}{dt}S_w(t).
\label{eq:first_passage_density}
\end{equation}
The cumulative distribution is therefore
\begin{equation}
F_w(t)
=
\Pr(\mathcal T_w\le t)
=
1-S_w(t).
\end{equation}

Finally, the mean first-passage time is obtained by integrating the survival probability:
\begin{equation}
\E[\mathcal T_w]
=
\int_0^\infty S_w(t)\,dt
=
\mathbf 1^\mathsf T(-A_w)^{-1}p_w(0).
\label{eq:first_passage_mean}
\end{equation}
Thus the directed cycle-word first-passage statistics used in
\cref{sec:timescales} are computed exactly from the finite-dimensional augmented generator.

\bibliographystyle{unsrtnat}
\bibliography{references}

\end{document}